\documentclass[12pt]{article}
\usepackage[utf8]{inputenc}
\usepackage{amsthm,amsmath,amssymb,bbm,hyperref,graphicx,float}
\usepackage[dvipsnames]{xcolor}
\allowdisplaybreaks

\title{Time Evolution of Typical Pure States from a Macroscopic Hilbert Subspace}
\author{
Stefan Teufel\footnote{Mathematics Institute, Eberhard Karls University T\"ubingen, 
	Auf der Morgenstelle 10, 72076 T\"ubingen, Germany. ORCID: 0000-0003-3296-4261,
	E-mail: stefan.teufel@uni-tuebingen.de},~
Roderich Tumulka\footnote{Mathematics Institute, Eberhard Karls University T\"ubingen, 
	Auf der Morgenstelle 10, 72076 T\"ubingen, Germany. 
	ORCID: 0000-0001-5075-9929
	E-mail: roderich.tumulka@uni-tuebingen.de},~ 
Cornelia Vogel\footnote{Mathematics Institute, Eberhard Karls University T\"ubingen, 
	Auf der Morgenstelle 10, 72076 T\"ubingen, Germany.
	ORCID: 0000-0002-3905-4730,
	E-mail: cornelia.vogel@uni-tuebingen.de}
}
\date{December 19, 2022}

\addtolength{\textwidth}{2.0cm}
\addtolength{\hoffset}{-1.0cm}
\addtolength{\textheight}{2.5cm}
\addtolength{\voffset}{-1.5cm}

\newtheorem{thm}{Theorem}
\newtheorem{prop}{Proposition}
\newtheorem{lemma}{Lemma}
\newtheorem{cor}{Corollary}
\theoremstyle{definition}

\theoremstyle{remark}

\DeclareMathOperator{\tr}{tr}
\DeclareMathOperator{\Var}{Var}

\newcommand{\IM}{\textup{Im\,}}
\newcommand{\Hilbert}{\mathcal{H}}
\newcommand{\EEE}{\mathbb{E}}

\newcommand{\RRR}{\mathbb{R}}
\newcommand{\SSS}{\mathbb{S}}
\newcommand{\be}{\begin{equation}}
\newcommand{\ee}{\end{equation}}

\newcommand{\scp}[2]{\langle #1|#2 \rangle}
\newcommand{\prj}{\Pi}

\begin{document}
\maketitle

\begin{abstract}
We consider a macroscopic quantum system with unitarily evolving pure state $\psi_t\in \Hilbert$ and take it for granted that different macro states correspond to mutually orthogonal, high-dimensional subspaces $\Hilbert_\nu$ (macro spaces) of $\Hilbert$. Let $P_\nu$ denote the projection to $\Hilbert_\nu$. We prove two facts about the evolution of the superposition weights $\|P_\nu\psi_t\|^2$: First, given any $T>0$, for most initial states $\psi_0$ from any particular macro space $\Hilbert_\mu$ (possibly far from thermal equilibrium), the curve $t\mapsto \|P_\nu \psi_t\|^2$ is approximately the same (i.e., nearly independent of $\psi_0$) on the time interval $[0,T]$. And second, for most $\psi_0$ from $\Hilbert_\mu$ and most $t\in[0,\infty)$, $\|P_\nu \psi_t\|^2$ is close to a value $M_{\mu\nu}$ that is independent of both $t$ and $\psi_0$. 
The first is an instance of the phenomenon of dynamical typicality observed by Bartsch, Gemmer, and Reimann, and the second modifies, extends, and in a way simplifies the concept, introduced by von Neumann, now known as normal typicality.

\medskip

Key words: von Neumann's quantum ergodic theorem; eigenstate thermalization hypothesis; macroscopic quantum system; dynamical typicality; long-time behavior.
\end{abstract}

\section{Introduction}

The approach of studying thermalization through the analysis of closed quantum systems with huge numbers of degrees of freedom has led, among other things, to the \emph{eigenstate thermalization hypothesis} (ETH) \cite{Deutsch91,Srednicki94,GogEis16}, to the discovery of \emph{canonical typicality} \cite{GMM04,PSW06,GLTZ06}, and more recently to the discovery of \emph{dynamical typicality} \cite{BRGSR18,BG09,MGE,Reimann2018b,Reimann2018a,RG20}, which is the fact that most pure states $\psi$ with a given quantum expectation value $\scp{\psi}{A|\psi}$ of a macroscopic observable $A$ also have nearly the same $\scp{\psi}{B|\psi}$ for any other observable $B$ (and likewise also nearly the same $\scp{\psi_t}{B|\psi_t}$). 
Here, we provide a very simple proof of an important special case of this statement, namely for $A$ a projection and $\scp{\psi}{A|\psi}=1$. Put differently, we show that  most $\psi$ from a macroscopically large subspace of Hilbert space have almost the same expectation values of bounded observables.

Our second result concerns the long-time behavior of $\scp{\psi_t}{B|\psi_t}$ under the unitary evolution $\psi_t=\exp(-iHt)\psi_0$ (taking $\hbar=1$) and extends previous results of Reimann and Gemmer \cite{RG20} as well as von Neumann's \cite{vonNeumann29} result now known as \emph{normal typicality} \cite{GLMTZ09,GLTZ10}. In particular, our result avoids certain unrealistic assumptions of von Neumann's.

As usual for the description of  macroscopic closed quantum systems, we restrict our consideration to a micro-canonical energy interval $[E-\Delta E,E]$ that is small in macroscopic units but large enough to contain very many eigenvalues of the Hamiltonian~$H$; for a system of $N$ particles, relevant intervals contain of order $\exp(N)$ eigenvalues. Let~$\Hilbert$ be the corresponding spectral subspace, i.e., the range of $\mathbbm{1}_{[E-\Delta E,E]}(H)$, or \textit{energy shell}, and let $\mathbb{S}(\Hilbert) = \{\psi \in \Hilbert: \|\psi\|=1\}$ denote the unit sphere and $D:=\dim\Hilbert<\infty$. Following von Neumann \cite{vonNeumann29}, we assume that different macro states $\nu$ of the system correspond to mutually orthogonal subspaces $\Hilbert_\nu$ (macro spaces) of $\Hilbert$ such that
\begin{align}
    \Hilbert = \bigoplus_{\nu}\Hilbert_\nu.\label{eq: decomposition}
\end{align}
Different vectors in the same $\Hilbert_\nu$ are regarded as ``looking macroscopically equal''. For example, the ``macroscopic look'' could be defined in terms of mutually commuting self-adjoint operators $M_1,\ldots,M_K$ regarded as the ``macroscopic observables'' \cite{vonNeumann29}; then $\Hilbert_\nu$ are the joint eigenspaces and $\nu=(m_1,\ldots,m_K)$ is the corresponding list of eigenvalues.
Let $P_\nu$ denote the projection onto $\Hilbert_\nu$. Although some macro spaces will have much larger dimensions $d_\nu:=\dim\Hilbert_\nu$ than others, all $d_\nu$ will be very large, roughly comparable to $\exp(N)$. 

In this setting, it is natural to consider initial states $\psi_0$ from a certain macro space and ask about the time evolution of the \textit{macroscopic superposition weights} $\|P_\nu \psi_t\|^2$. 
We present two general, theoretical findings about these weights that mainly arise just from the hugeness of the $d_\nu$'s. The first finding (dynamical typicality) is that the curve given by $\|P_\nu\psi_t\|^2$ as a function of $t$ is nearly $\psi_0$-independent once we fix the macro state of $\psi_0$. In other words, if $\psi_0$ is purely random in $\Hilbert_\mu$, then the superposition weights are nearly deterministic. 
The second finding (generalized normal typicality) is that in the long run, as $t\to\infty$, $\|P_\nu\psi_t\|^2$ is nearly constant, meaning it is close for most $t\in[0,\infty)$ to a $t$-independent and $\psi_0$-independent value, once we fix the macro state of $\psi_0$. 
This does not mean that $\|P_\nu\psi_t\|^2$ converges as $t\to\infty$ (it does not), but that the time periods in which $\|P_\nu\psi_t\|^2$ is far from that value tend to be short compared to the time intervals separating these periods. One can say that the $\|P_\nu\psi_t\|^2$ \emph{equilibrate} in the long run; however, this equilibration does not correspond to thermal equilibrium in the sense of thermodynamics; rather, thermal equilibrium at time $t$ would correspond to $\|P_\nu\psi_t\|^2\approx 1$ for one particular $\nu$ (the macro state of thermal equilibrium, $\Hilbert_\nu=\Hilbert_{\textup{eq}}$) and $\|P_\nu\psi_t\|^2\approx 0$ for all other $\nu$'s. We therefore speak of \emph{normal equilibrium} when $\|P_\nu\psi_t\|^2$ assumes its long-term value for all $\nu$.

Our results are \emph{typicality} statements, i.e., they concern the way \emph{most} $\psi_0$ behave, notwithstanding the existence of few exceptional $\psi_0$ that behave differently. However, a statement about most $\psi_0$ in $\SSS(\Hilbert)$ would be of limited interest because it could be violated by every system outside of thermal equilibrium, as usually most $\psi_0$ in $\SSS(\Hilbert)$ are in thermal equilibrium (meaning they are close to $\Hilbert_{\textup{eq}}$) \cite{GLMTZ10}. Instead, we make more specific statements: we allow an arbitrary initial macro space $\Hilbert_\mu$, possibly far from thermal equilibrium, and make statements about most $\psi_0$ in $\SSS(\Hilbert_\mu)$. Such statements are also naturally of interest when we ask about the increase of the quantum Boltzmann entropy observable \cite{GLTZ20}
\be
\hat S = \sum_\nu S(\nu) P_\nu\,,
\ee
where 
\be
S(\nu)=k_B \log d_\nu
\ee
is the quantum Boltzmann entropy of the macro state $\nu$, and $k_B$ is the Boltzmann constant. Note that a quantum system can be in a superposition of different macro states and thus also in a superposition of different entropy values.

In Section~\ref{sec:dyntyp}, we formulate 
our theorem about 
dynamical typicality and compare it to related results in the literature. In Section~\ref{sec:normtyp}, the same for generalized normal typicality. In Section~\ref{sec:proofdyntyp}, we prove our result on dynamical typicality. In Section~\ref{sec:proofstrategies}, we formulate further variants of our results. In Section~\ref{sec:realistic}, conclusions for realistic sizes of $d_\nu$ are discussed. In Section~\ref{sec:outlinepfGNT}, we outline the proof of generalized normal typicality. In Section~\ref{sec:proofs}, we collect the remaining proofs. In Section~\ref{sec:conclusions}, we conclude.

\section{Dynamical Typicality}
\label{sec:dyntyp}

\subsection{Mathematical Description}

For formulating theorems,
we introduce the following terminology. Suppose that for each $\psi\in\SSS(\Hilbert_\mu)$, the statement $s(\psi)$ is either true or false, and let $\varepsilon>0$. We say that $s(\psi)$ is true for $(1-\varepsilon)$-\textit{most} $\psi\in\SSS(\Hilbert_\mu)$ if and only if
\begin{align}
    u_\mu\big(\big\{\psi\in\mathbb{S}(\Hilbert_\mu): s(\psi)\big\}\big) \geq 1-\varepsilon\,,
\end{align}
where $u_\mu$ is the normalized uniform measure over $\SSS(\Hilbert_\mu)$. Similarly, given $T>0$ and $\delta>0$, we say that a statement $s(t)$ is true for $(1-\delta)$-\textit{most} $t\in [0,T]$ if and only if
\be\label{mosttdef1}
  \tfrac{1}{T} \big|\big\{t\in [0,T]: s(t)\big\}\big| \geq 1-\delta\,,
\ee
where $|S|$ means the length of the set $S\subset \mathbb{R}$; and that $s(t)$ is true for $(1-\delta)$-\textit{most} $t\in [0,\infty)$ if and only if the lim inf of the left-hand side of \eqref{mosttdef1} as $T\to\infty$ is $\geq 1-\delta$.

The first finding we mentioned
can be expressed as follows. 

\begin{thm}[Dynamical typicality]\label{thm:dyntyp}
Let $\mu, \nu$ be arbitrary macro states. There is a function $w_{\mu\nu}:\mathbb{R} \to [0,1]$ such that for every $t\in\mathbb{R}$ and every $\varepsilon>0$, for $(1-\varepsilon)$-most $\psi_0 \in \mathbb{S}(\Hilbert_\mu)$,
\begin{equation}\label{dyntyp1}
\Bigl| \|P_\nu \psi_t\|^2-w_{\mu\nu}(t)\Bigr| \leq \frac{1}{\sqrt{\varepsilon d_\mu}} \,.
\end{equation}
Moreover, for every $\mu,\nu$, every $T>0$, and $(1-\varepsilon)$-most $\psi_0\in\SSS(\Hilbert_\mu)$,
\begin{equation}\label{dyntyp2}
\frac{1}{T}\int_0^T \!  \bigl| \|P_\nu \psi_t\|^2-w_{\mu\nu}(t)\bigr|^2  dt \leq \frac{1}{\varepsilon d_\mu} \,.
\end{equation}
\end{thm}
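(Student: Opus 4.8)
The plan is to reduce both claims to the concentration of a single bounded quadratic form under the uniform measure $u_\mu$ on $\SSS(\Hilbert_\mu)$. Writing $\psi_t=e^{-iHt}\psi_0$, I would set
\be
X_t(\psi_0):=\|P_\nu\psi_t\|^2=\scp{\psi_0}{A_t|\psi_0}\,,\qquad A_t:=e^{iHt}P_\nu e^{-iHt}\,,
\ee
so that $A_t$ is again an orthogonal projection (a unitary conjugate of $P_\nu$) and in particular $0\le A_t\le\mathbbm{1}$. I would then \emph{define} the candidate curve to be the mean,
\be
w_{\mu\nu}(t):=\EEE_{u_\mu}[X_t]\,,
\ee
which automatically lies in $[0,1]$ because $0\le X_t\le 1$ pointwise. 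With this choice, \eqref{dyntyp1} becomes the statement that $X_t$ concentrates near its mean, and \eqref{dyntyp2} the statement that its time-averaged squared deviation is small on average; so no spectral information about $H$ is needed, the Hamiltonian entering only through the fixed operator $A_t$.

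The computational heart is the evaluation of the first two moments of $X_t$. Since $\psi_0$ ranges over the unit sphere of the $d_\mu$-dimensional space $\Hilbert_\mu$, only the compression $\tilde A_t:=P_\mu A_t P_\mu$ enters. The first-moment identity $\EEE_{u_\mu}\big[\pr{\psi_0}\big]=P_\mu/d_\mu$ gives
\be
w_{\mu\nu}(t)=\tfrac{1}{d_\mu}\tr\big(P_\mu A_t P_\mu\big)=\tfrac{1}{d_\mu}\tr\big(P_\mu\,e^{iHt}P_\nu e^{-iHt}\big)\,.
\ee
For the variance I would invoke the second-moment formula on the sphere, $\EEE_{u_\mu}\big[\scp{\psi_0}{B|\psi_0}\scp{\psi_0}{C|\psi_0}\big]=\frac{\tr(B)\tr(C)+\tr(BC)}{d_\mu(d_\mu+1)}$ for operators $B,C$ on $\Hilbert_\mu$, which follows from $\EEE_{u_\mu}\big[(\pr{\psi_0})^{\otimes 2}\big]=(\mathbbm{1}+\mathrm{SWAP})/\big(d_\mu(d_\mu+1)\big)$ on $\Hilbert_\mu\otimes\Hilbert_\mu$. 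Taking $B=C=\tilde A_t$ and writing $a:=\tr\tilde A_t$, $b:=\tr\tilde A_t^2$ yields
\be
\Var_{u_\mu}(X_t)=\frac{d_\mu\,b-a^2}{d_\mu^2(d_\mu+1)}\,.
\ee
Because $A_t$ is a projection, $0\le\tilde A_t\le P_\mu$, hence $\tilde A_t^2\le\tilde A_t$ and $0\le b\le a\le d_\mu$; dropping $-a^2$ and using $b\le d_\mu$ gives the uniform-in-$t$ bound
\be
\Var_{u_\mu}(X_t)\le\frac{1}{d_\mu}
\ee
(a one-line refinement via $a(d_\mu-a)$ even yields $\tfrac{1}{4(d_\mu+1)}$, but the cruder bound already suffices).

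With the variance bound in hand, both assertions follow from elementary inequalities. For \eqref{dyntyp1} I would fix $t$ and apply Chebyshev: for $k>0$,
\be
u_\mu\big(\{\psi_0:|X_t-w_{\mu\nu}(t)|\ge k\}\big)\le\frac{\Var_{u_\mu}(X_t)}{k^2}\le\frac{1}{k^2 d_\mu}\,,
\ee
and choosing $k=1/\sqrt{\varepsilon d_\mu}$ makes the right-hand side $\le\varepsilon$, which is exactly the claim. For \eqref{dyntyp2} I would consider the nonnegative variable $Y(\psi_0):=\tfrac{1}{T}\int_0^T|X_t-w_{\mu\nu}(t)|^2\,dt$; by Fubini (the integrand is jointly measurable and bounded) and $w_{\mu\nu}(t)=\EEE_{u_\mu}[X_t]$,
\be
\EEE_{u_\mu}[Y]=\frac{1}{T}\int_0^T\Var_{u_\mu}(X_t)\,dt\le\frac{1}{d_\mu}\,,
\ee
and Markov's inequality then gives $u_\mu\big(\{Y\ge 1/(\varepsilon d_\mu)\}\big)\le\varepsilon d_\mu\,\EEE_{u_\mu}[Y]\le\varepsilon$. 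The only genuinely nontrivial ingredient is the second-moment formula for $\EEE_{u_\mu}\big[(\pr{\psi_0})^{\otimes 2}\big]$; the one point I would take care to justify is that passing to the compression $\tilde A_t$ is legitimate and that the resulting operator satisfies $0\le\tilde A_t\le P_\mu$, since this is precisely what converts the raw variance identity into the clean $1/d_\mu$ bound from which Theorem~\ref{thm:dyntyp} drops out.
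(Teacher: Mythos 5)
Your proposal is correct and follows essentially the same route as the paper: identify $w_{\mu\nu}(t)$ as the sphere average $\frac{1}{d_\mu}\tr\bigl(P_\mu e^{iHt}P_\nu e^{-iHt}\bigr)$, bound the Hilbert-space variance of the quadratic form by $1/d_\mu$, and conclude via Chebyshev for \eqref{dyntyp1} and Fubini plus Markov for \eqref{dyntyp2}. The only cosmetic difference is how the variance bound is obtained --- you use the swap-operator second-moment formula together with the projection inequality $\tilde A_t^2\le\tilde A_t\le P_\mu$, while the paper uses the variance identity \eqref{Hilbertvar} and the trace inequality $|\tr(CD)|\le\|C\|\tr(|D|)$ --- but these yield the same estimate by essentially the same computation.
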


That is, if $d_\mu\gg 1/\varepsilon$, then for any $t$ and purely random $\psi_0$ from $\Hilbert_\mu$, the random value $\|P_\nu\psi_t\|^2$ is very probably close to the non-random value $w_{\mu\nu}(t)$. The latter can in fact be taken to be the average of $\|P_\nu\psi_t\|^2$ over $\psi_0\in\SSS(\Hilbert_\mu)$, which is
\be
w_{\mu\nu}(t) := \frac{1}{d_\mu} \tr\Bigl[ P_\mu \exp(iHt) P_\nu \exp(-iHt)\Bigr] \,.
\ee
Likewise, the whole curve of $\|P_\nu\psi_t\|^2$ as a function of $t\in[0,T]$ is very probably close, in the $L^2$ norm, to $w_{\mu\nu}(t)$ as a function of $t$. (Smallness of the $L^2$ norm implies further that $\bigl| \|P_\nu \psi_t\|^2-w_{\mu\nu}(t)\bigr|$ is small for most $t$; however, this statement, which is equivalent to saying that the expression is small for most pairs $(t,\psi_0)\in [0,T]\times \SSS(\Hilbert_\mu)$, follows already from \eqref{dyntyp1}; note that the quantifiers ``most $t$'' and ``most $\psi_0$'' commute. Moreover, it also follows from \eqref{dyntyp2} by letting $T\to\infty$ that the long-time average of $\bigl| \|P_\nu \psi_t\|^2-w_{\mu\nu}(t)\bigr|^2$ is small, but this statement is actually weaker than for finite $T$, and it will be superseded below by a more specific statement in our second result, generalized normal typicality.)
A more general statement for arbitrary operators $B$ instead of $P_\nu$ and a tighter error bound is formulated in Section~\ref{sec:proofstrategies}.

As a further remark, we observe that another quantity is also deterministic for purely random $\psi_0$ from $\SSS(\Hilbert_\mu)$: not only is the probability $\|P_\nu \psi_t\|^2$ associated with $\Hilbert_\nu$ at time $t$ nearly deterministic, but also the \emph{probability current} between $\Hilbert_\nu$ and $\Hilbert_{\nu'}$,
\be\label{Jdef}
J_{\nu\nu'} := -i\left(\langle\psi_t|P_\nu H P_{\nu'}|\psi_t\rangle -\langle\psi_t|P_{\nu'}HP_\nu|\psi_t\rangle\right) =  2\, \IM \scp{\psi_t}{P_\nu H P_{\nu'}|\psi_t} \,.
\ee
This quantity expresses the amount of probability passing, per unit time, from $\nu'$ to $\nu$ minus that from $\nu$ to $\nu'$; it satisfies a discrete version of the continuity equation, viz.,
\begin{align}
    \partial_t \|P_\nu\psi_t\|^2 = \sum_{\nu'} J_{\nu\nu'}.
\end{align}
In Section~\ref{sec:current} we will show that the probability current between two macro spaces is deterministic.

\subsection{Previous Results about Dynamical Typicality}

Bartsch and Gemmer \cite{BG09} introduced the name ``dynamical typicality'' for the following closely related phenomenon: Given an observable $A$ and $a\in\mathbb{R}$, there is a function $a(t)$ such that for every $t\in\mathbb{R}$ and most $\psi_0\in\SSS(\Hilbert)$ with $\scp{\psi_0}{A|\psi_0} \approx a$, $\scp{\psi_t}{A|\psi_t} \approx a(t)$. M\"uller, Gross, and Eisert \cite{MGE} proved a rigorous version of this fact that also implies  that for every operator $B$ whose operator norm (largest absolute eigenvalue or singular value) is not too large, there is a value $b$ such that for most $\psi_0\in\SSS(\Hilbert)$ with $\scp{\psi_0}{A|\psi_0} \approx a$, $\scp{\psi_0}{B|\psi_0}\approx b$.  As Reimann \cite{Reimann2018a} pointed out, this also implies that for every $t\in\mathbb{R}$ and most $\psi_0\in\SSS(\Hilbert)$ with $\scp{\psi_0}{A|\psi_0} \approx a$, $\scp{\psi_t}{B|\psi_t} \approx b(t)$ for suitable $b(t)$. Setting $A=P_\mu$, $a=1$, and $B=P_\nu$, this yields that for every $t\in\mathbb{R}$ and most $\psi_0\in\SSS(\Hilbert_\mu)$, $\scp{\psi_t}{P_\nu|\psi_t}=\|P_\nu\psi_t\|^2$ is nearly deterministic. For technical reasons, the proofs of M\"uller, Gross, and Eisert \cite{MGE} and Reimann \cite{Reimann2018a} do not actually cover the case that $A$ is a projection and $a=1$. As was pointed out to us by one of the referees of our paper,    Balz et al.\ \cite{BRGSR18} provide a general result that covers Theorem~\ref{thm:dyntyp} as a special case.  Although our proof strategy is similar to the one in \cite{BRGSR18}, we decided to present our proof in this paper, because it is very simple and transparent and could help to make the at first sight striking phenomenon of dynamical typicality a text book result. 
Theorem~\ref{thm:dyntyp} can also be obtained through a proof strategy used by Reimann and Gemmer \cite{RG20}.

A further related result is given by Strasberg et al.\ \cite{SWGW22}, who consider repeated measurements at $0<t_1<t_2<\ldots<t_r<T$ of all $P_\nu$'s and argue that the probability distribution of the outcomes is essentially indistinguishable from the joint distribution of $X_{t_1},\ldots,X_{t_r}$ for a suitable Markov process $X_t$ on the set of $\nu$'s. This includes the claim that omitting one of the measurements does not significantly alter the distribution of the other outcomes, so the distribution of $X_t$ should agree with $\|P_\nu \psi_t\|^2$, which is in line with our result.

\section{Generalized Normal Typicality}
\label{sec:normtyp}

\subsection{Motivation}

It is well known that for most $\phi \in \mathbb{S}(\Hilbert)$, 
\begin{align}
    \|P_\nu\phi\|^2 \approx \frac{d_\nu}{D}\,, \label{eq: Pnuphi}
\end{align}
provided that $d_\nu$ and $D:=\dim\Hilbert$ are large~\cite{GLMTZ09}. 
Under the additional condition that relative to a fixed decomposition \eqref{eq: decomposition} into macro spaces  the eigenbasis of $H$ 
is chosen purely randomly among all orthonormal bases (and some further technical conditions that are not very restrictive),  
\eqref{eq: Pnuphi} holds also for the eigenstates of $H$, and it can be shown that
every $\psi_0\in\mathbb{S}(\Hilbert)$ evolves so that for most times $t$,
\begin{align}
    \|P_\nu\psi_t\|^2 \approx \frac{d_\nu}{D}\,.
\end{align}
This fact is known as \textit{normal typicality} \cite{vonNeumann29,GLMTZ09,GLTZ10,Reimann2015}. 

The assumption of a purely random eigenbasis can be regarded as expressing that the energy eigenbasis is unrelated to the orthogonal decomposition \eqref{eq: decomposition}.
In most realistic systems, however, the energy eigenbasis and the macro decomposition \eqref{eq: decomposition} are  not unrelated. If they were unrelated, then the system would very rapidly go from any macro space $\Hilbert_\nu$ directly to the thermal equilibrium macro space $\Hilbert_{\textup{eq}}$ (a macro space containing most dimensions of $\Hilbert$, $d_{\textup{eq}}/D\approx 1$) \cite{GHT13,GHT15,GHT14}.
But that does not happen in most systems because thermal equilibrium requires that energy (and other quantities) is rather evenly distributed over all degrees of freedom, and for getting evenly distributed, it needs to get transported through space, which usually requires time and passage through other macro states, cf.\ Figure~\ref{fig1}.

\begin{figure}[h]
\centering
 \includegraphics[width=0.7\textwidth]{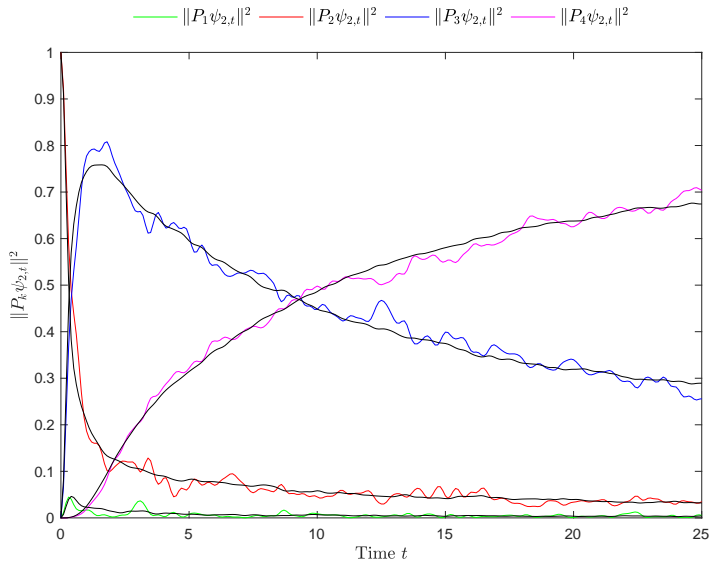}
 \caption{\label{fig1} 
 Example of time evolution of superposition weights $\|P_\nu \psi_t\|^2$, here in a Hilbert space of dimension $D=2222$ decomposed into 4 macro spaces of dimensions $d_1=2$ (green curve), $d_2=20$ (red curve), $d_3=200$ (blue curve), and $d_4=2000$ (purple curve). The four curves add up to 1 at each $t$. At large $t$, the equilibrium subspace $\Hilbert_4$ has the biggest contribution. $\psi_0$ was chosen purely randomly from $\SSS(\Hilbert_2)$ (i.e., $\mu=2$, so the red curve starts at 1, all others at 0). 
 The Hamiltonian is a random band matrix (i.e., only entries sufficiently close to the main diagonal are significantly nonzero) in a basis aligned with the macro spaces,  but with a wide enough bandwidth to still ensure delocalized eigenfunctions.
 Thus, parts of $\psi_t$ reach $\Hilbert_4$ only after passing through $\Hilbert_3$, as mirrored in the fact that the blue curve increases first before it decreases in favor of the purple curve. Along with each of the four curves, also its deterministic approximation $w_{2\nu}(t)$ (in black) is drawn; dynamical typicality asserts that it is a good approximation. 
}
\end{figure}

That is why we are interested in generalizations of normal typicality that apply also to Hamiltonians whose eigenbasis is not unrelated to $\Hilbert_\nu$. For such $H$, eigenvectors $\phi$ must be expected to have superposition weights $\|P_\nu\phi\|^2$ not always near $d_\nu/D$. Our result actually applies to \textit{all} Hamiltonians, at the expense that it does not apply to \textit{all} initial quantum states $\psi_0$. As noted already, a statement about \textit{most} $\psi_0\in\mathbb{S}(\Hilbert)$ would be limited to systems starting out in thermal equilibrium. 
Our result states that for any macro state $\mu$, most $\psi_0\in\SSS(\Hilbert_\mu)$  evolve so that for most times $t$
\begin{align}
  \|P_\nu\psi_t\|^2 \approx M_{\mu\nu}\,,
  \end{align}
provided that $d_\mu$ is large. See Theorem~\ref{thm: GNT Pnu} for the precise quantitative statement and the definition of $M_{\mu\nu}$. The proof (see Section~\ref{sec:proofs}) builds particularly on techniques developed by Short and Farrelly \cite{Short11,SF12}, but is also related to a series of works on quantum equilibration (e.g., \cite{Reimann08,LPSW09}) in which the long-time behavior of $\scp{\psi_t}{B|\psi_t}$ is studied under various assumptions on $B$ and $\psi_0$.

The $M_{\mu\nu}$ are actually the averages of $\|P_\nu \psi_t\|^2$ over $t\in[0,\infty)$ and over $\psi_0\in \SSS(\Hilbert_\mu)$. Thus, they
depend only on $H$ and the decomposition \eqref{eq: decomposition}, but not on $t$ or $\psi_0$. 

In this setting, \emph{thermalization} means that $M_{\mu\,\mathrm{eq}}\approx 1$ for every $\mu$, i.e., that for all macro states $\mu$ the overwhelming majority of micro states eventually reach thermal equilibrium in the sense that $\psi_t$ lies almost completely in $\Hilbert_\mathrm{eq}$ and spends most of the time in the long run there.
The time scale on which thermalization happens can be read off from the function $w_{\mu\,\mathrm{eq}}(t)$, while the other $w_{\mu\nu}(t)$ provide   information about the detailed path to thermal equilibrium passing  through intermediate macro states.

\subsection{Statement of Result}

In the following we consider Hamiltonians with spectral decomposition
\begin{align}
H=\sum_{e\in\mathcal{E}} e \,\prj_e,
\end{align}
where $\mathcal{E}$ is the set of  distinct eigenvalues of $H$ and $\prj_e$ the projection onto the eigenspace of $H$ with eigenvalue $e$. The  quantitative bounds in our theorem depend on the Hamiltonian only through the following characteristics of the distribution of its eigenvalues: the maximum degeneracy   $D_E := \max_{e\in\mathcal{E}} \tr(\prj_e)$ of an eigenvalue and the maximal gap degeneracy
\be\label{DGdef}
D_G := \max_{E\in\RRR} \# \bigl\{ (e,e')\in\mathcal{E}\times\mathcal{E}: e\neq e' \text{ and }e-e'=E \bigr\}\,.
\ee

\begin{thm}[Generalized normal typicality]\label{thm: GNT Pnu}
 Let $\mu,\nu$ be any macro states and define 
\begin{align}
    M_{\mu \nu} &:= \frac{1}{d_\mu} \sum_{e\in\mathcal{E}} \tr\left(P_\mu \prj_e P_\nu \prj_e\right)\,.
\end{align}
Then for any  $\varepsilon, \delta>0$,
 $(1-\varepsilon)$-most $\psi_0\in\mathbb{S}(\Hilbert_\mu)$ are such that for $(1-\delta)$-most $t\in[0,\infty)$ 
\begin{align}
    \biggl| \|P_\nu\psi_t\|^2 - M_{\mu\nu} \biggr|\leq
    4\,\sqrt{\frac{D_E D_G}{\delta\varepsilon d_\mu}\min\left\{1,\frac{d_\nu}{d_\mu}\right\}}\,.\label{ineq: GNT Pnu}
\end{align}
\end{thm}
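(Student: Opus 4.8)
The plan is to bound the joint average of $\bigl(\|P_\nu\psi_t\|^2-M_{\mu\nu}\bigr)^2$ over both $t\in[0,\infty)$ and $\psi_0\in\SSS(\Hilbert_\mu)$, and then to turn this single number into the nested ``most $\psi_0$ / most $t$'' statement by two applications of Markov's inequality. Writing $f(t):=\|P_\nu\psi_t\|^2$ and $c_{ee'}:=\scp{\psi_0}{\prj_{e'}P_\nu\prj_e|\psi_0}$, the spectral expansion reads $f(t)=\sum_{e,e'}e^{-i(e-e')t}c_{ee'}$, so its infinite-time average is $\overline f=\sum_e c_{ee}=\scp{\psi_0}{Q|\psi_0}$ with the pinching $Q:=\sum_e\prj_e P_\nu\prj_e$; using $\EEE_{\psi_0}\pr{\psi_0}=P_\mu/d_\mu$ one checks $\EEE_{\psi_0}[\overline f]=M_{\mu\nu}$. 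Since the cross term time-averages to zero, I would split
\be
\EEE_{\psi_0}\Bigl[\,\overline{(f-M_{\mu\nu})^2}\,\Bigr]=\EEE_{\psi_0}\bigl[\,\overline{(f-\overline f)^2}\,\bigr]+\Var_{\psi_0}\!\bigl(\overline f\bigr)
\ee
into a temporal-fluctuation part and a $\psi_0$-fluctuation part, and estimate them separately.

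The $\psi_0$-fluctuation term is the easier one. Because $0\le Q\le\mathbbm{1}$, the standard fourth-moment formula for the uniform measure on $\SSS(\Hilbert_\mu)$ gives $\Var_{\psi_0}(\overline f)\le\tr\bigl[(P_\mu QP_\mu)^2\bigr]/d_\mu^2\le\tr(P_\mu QP_\mu)/d_\mu^2=M_{\mu\nu}/d_\mu$. The symmetry $d_\mu M_{\mu\nu}=d_\nu M_{\nu\mu}$ together with $M_{\nu\mu}\le1$ yields $M_{\mu\nu}\le\min\{1,d_\nu/d_\mu\}$, so this contribution is already bounded by $\min\{1,d_\nu/d_\mu\}/d_\mu$.

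The temporal term is the crux, and is where the techniques of Short and Farrelly enter. By Parseval for almost-periodic functions, $\overline{(f-\overline f)^2}=\sum_{E\neq0}|C_E|^2$ with $C_E=\sum_{e-e'=E}c_{ee'}$; applying Cauchy--Schwarz to the at most $D_G$ pairs sharing a common gap $E$ gives, for each fixed $\psi_0$,
\be
\overline{(f-\overline f)^2}\le D_G\sum_{e,e'}|c_{ee'}|^2=D_G\,\tr\bigl[(P_\nu\omega P_\nu)^2\bigr],
\ee
where $\omega:=\sum_e\prj_e\pr{\psi_0}\prj_e$ is the dephased state. Taking $\EEE_{\psi_0}$ via the fourth-moment formula produces two contributions. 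The ``exchange'' contribution reduces, via the dephasing $W:=\sum_e\prj_e P_\mu\prj_e$ (again $0\le W\le\mathbbm{1}$), to $\tr(P_\nu WP_\nu W)\le\tr(P_\nu W)=d_\mu M_{\mu\nu}$. The ``direct'' contribution is $\sum_{e,e'}|b_{ee'}|^2$ with $b_{ee'}:=\tr(\prj_e P_\mu\prj_{e'}P_\nu)$, which I would control by writing $b_{ee'}$ as a Hilbert--Schmidt inner product and applying Cauchy--Schwarz, $|b_{ee'}|^2\le\|\prj_e P_\nu\prj_{e'}\|_{\mathrm{HS}}^2\,\|\prj_e P_\mu\prj_{e'}\|_{\mathrm{HS}}^2$, then bounding one Hilbert--Schmidt factor by $D_E$ (a rank estimate using $\tr\prj_e\le D_E$) and summing the other to $\min\{d_\mu,d_\nu\}$. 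Both contributions are thus $\lesssim D_E\min\{1,d_\nu/d_\mu\}/d_\mu$, so the temporal term is at most a constant times $D_ED_G\min\{1,d_\nu/d_\mu\}/d_\mu$.

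Combining the two parts gives $\EEE_{\psi_0}[\overline{(f-M_{\mu\nu})^2}]\le 3\,D_ED_G\min\{1,d_\nu/d_\mu\}/d_\mu$. The final step is the two-fold Markov argument. For fixed $\psi_0$, Markov's inequality over $t$ shows that if $\overline{(f-M_{\mu\nu})^2}\le\delta\eta^2$ then $|f(t)-M_{\mu\nu}|\le\eta$ for $(1-\delta)$-most $t\in[0,\infty)$ (the liminf in the definition of ``most $t$'' matching the limsup of the bad-set fraction); Markov's inequality over $\psi_0$ shows that $\overline{(f-M_{\mu\nu})^2}\le\delta\eta^2$ for $(1-\varepsilon)$-most $\psi_0$ as soon as $\eta^2\ge\EEE_{\psi_0}[\overline{(f-M_{\mu\nu})^2}]/(\varepsilon\delta)$. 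Choosing $\eta=4\sqrt{D_ED_G\min\{1,d_\nu/d_\mu\}/(\delta\varepsilon d_\mu)}$ makes $\eta^2$ exceed the required threshold (since $\sqrt3<4$), which is exactly \eqref{ineq: GNT Pnu}. I expect the genuine difficulty to lie in the fourth-moment evaluation of the temporal term and, specifically, in extracting the sharp combination $D_ED_G\min\{1,d_\nu/d_\mu\}$ rather than a cruder bound: the gap degeneracy $D_G$ must be peeled off before averaging, the eigenvalue degeneracy $D_E$ enters only through rank bounds on the $\prj_e$, and the asymmetric factor $\min\{1,d_\nu/d_\mu\}$ requires using the $\mu\leftrightarrow\nu$ symmetry of $d_\mu M_{\mu\nu}$ and a judicious choice of which Hilbert--Schmidt factor to bound.
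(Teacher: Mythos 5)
Your proof is correct, and its core estimates are the same ones the paper uses, but your architecture is genuinely different and more self-contained. The paper does not prove Theorem~\ref{thm: GNT Pnu} directly: it first proves the finite-time, general-operator Theorem~\ref{thm: GNT A}, whose time-averaging step writes the temporal fluctuations as a quadratic form $\sum_{\alpha,\beta}v_\alpha^* R_{\alpha\beta}v_\beta$ over gaps and invokes the nontrivial finite-$T$ bound $\|R\|\le G(\kappa)\bigl(1+\tfrac{8\log_2 d_E}{\kappa T}\bigr)$ of Short and Farrelly; Theorem~\ref{thm: GNT Pnu} then follows by setting $B=P_\nu$, choosing $\kappa$ with $G(\kappa)=D_G$, and letting $T\to\infty$. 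You work at $T=\infty$ from the start, so the same quantity is handled elementarily: Parseval for almost-periodic functions plus Cauchy--Schwarz over the at most $D_G$ pairs sharing a common gap is exactly the infinite-time shadow of the $\|R\|$ bound, and it makes $\kappa$, $G(\kappa)$, and $\log_2 d_E$ disappear. The trade-off is that the paper's detour yields quantitative finite-time information (how large $T$ must be for equilibration), which your argument cannot give, while your argument needs no external input. A second difference is the probabilistic bookkeeping: the paper goes through $M_{\psi_0 P_\nu}$ with the triangle inequality and two separate tail bounds (Markov for the time variance, Chebyshev for $|M_{\psi_0 P_\nu}-M_{\mu\nu}|$), each holding for $(1-\varepsilon/2)$-most $\psi_0$, whereas you use the exact identity $\overline{(f-M_{\mu\nu})^2}=\overline{(f-\overline f)^2}+(\overline f-M_{\mu\nu})^2$ (the cross term time-averages to zero) and need only one Markov over $\psi_0$ followed by one over $t$; this is cleaner and gives constant $\sqrt{3}$ where the paper's accounting gives $2+\sqrt{2}$, both under the stated $4$. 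Your individual estimates match the paper's almost term by term: the ``direct'' bound $\sum_{e,e'}|\tr(P_\mu\prj_e P_\nu\prj_{e'})|^2\le D_E\min\{d_\mu,d_\nu\}$ and the ``exchange'' bound via the pinching $W=\sum_e\prj_e P_\mu\prj_e$ are the paper's two estimates specialized to $B=P_\nu$, and your shortcut $M_{\mu\nu}\le\min\{1,d_\nu/d_\mu\}$ from the symmetry $d_\mu M_{\mu\nu}=d_\nu M_{\nu\mu}$ is a nice replacement for the paper's norm/trace bound on $\Var_\mu$ of the time average. In a full write-up you would only need to spell out the existence of the infinite-time averages (automatic, since $\|P_\nu\psi_t\|^2$ is a finite trigonometric sum) and the $\liminf$/$\limsup$ bookkeeping in the definition of most $t\in[0,\infty)$, both of which you correctly flag.
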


Thus, as soon as $d_\mu \gg D_E D_G$, i.e., as soon as the dimension of $\Hilbert_\mu$ is huge and no eigenvalue and no gap of $H$ is macroscopically degenerate, for most initial states $\psi_0\in\mathbb{S}(\Hilbert_\mu)$ the superposition weight $\|P_\nu\psi_t\|^2$ will be close to the fixed value $M_{\mu\nu}$ for most times $t$.

\begin{figure}[h]
\centering
  \includegraphics[width=0.7\textwidth]{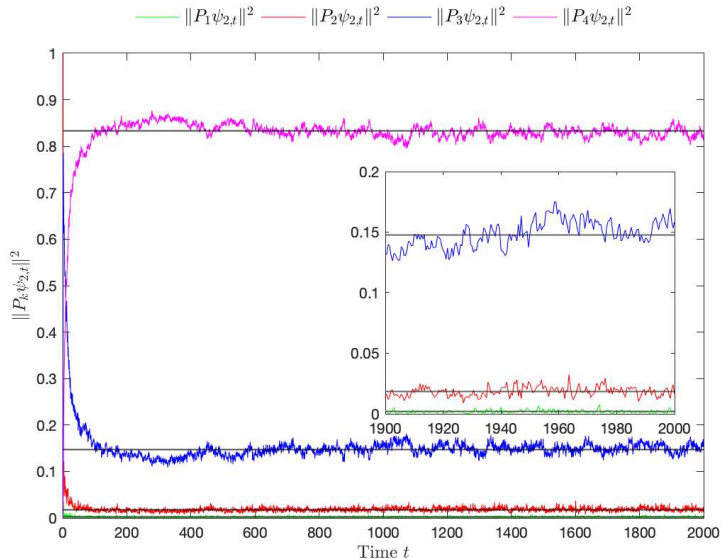}
 \caption{\label{fig2} The same simulation as in Figure~\ref{fig1}, only for longer times. The horizontal black lines indicate the values of the weights $M_{2\nu}$. The inset shows a part of the figure in magnification. Theorem~\ref{thm: GNT Pnu} states that the displayed behavior is typical of initial states in $\Hilbert_2$: up to fluctuations that are either small or rare, $\|P_\nu\psi_t\|^2$ is   close to $M_{2\nu}$.}
\end{figure}

For comparison, Reimann and Gemmer \cite{RG20} also concluded that $\scp{\psi_t}{A|\psi_t}$ is nearly constant, but for a different ensemble based on the condition $\scp{\psi_0}{A|\psi_0}\approx a$. We also provide a statement analogous to Theorem~\ref{thm: GNT Pnu} for $\scp{\psi_t}{B|\psi_t}$ with arbitrary observable $B$ instead of $P_\nu$ in Theorem~\ref{thm: GNT A} below.

\subsection{Example}

We illustrate Theorem~\ref{thm: GNT Pnu} within a simple random matrix model.
We partition the $D$-dimensional Hilbert space $\Hilbert:= \mathbb{C}^D =  \mathbb{C}^{d_1} \oplus \mathbb{C}^{d_2}\oplus \mathbb{C}^{d_3}\oplus \mathbb{C}^{d_4}=:\bigoplus_{\nu=1}^4\Hilbert_\nu$  into four macro spaces $\Hilbert_\nu$ of dimension $d_\nu$, i.e., $\Hilbert_1$ is spanned by the first $d_1$ canonical basis vectors, $\Hilbert_2$ by the next $d_2$ canonical basis vectors  and so on.
 The Hamiltonian $H$ is  a random $D\times D$-matrix $H$ that  has a band structure (i.e., mainly near-diagonal entries) and thus couples neighboring macro spaces more strongly than distant ones. 
  More precisely, we choose $H=(h_{ij})_{ij}$ to be a self-adjoint random matrix such that $h_{ii} \sim \mathcal{N}(0, \sigma_{ii}^2)$ and $h_{ij} \sim \mathcal{N}(0, \sigma_{ij}^2/2) + i \mathcal{N}(0, \sigma_{ij}^2/2)$ for $i\neq j$, where 
\begin{align}
    \sigma_{ij}^2 := \exp(-s|i-j|)
\end{align}
with some $s>0$ that controls the bandwidth. That is, the variances decrease exponentially in the distance from the diagonal.

In Figures~\ref{fig1} and \ref{fig2} the weights $\|P_\nu \psi_t\|^2$ are plotted for 
the values $s=0.02$, $d_\nu = 2 \times 10^{\nu-1}$, and 
 a random initial vector $\psi_0\in \Hilbert_2$. In Figure~\ref{fig1} the plot shows the initial phase where the system first passes through the 3rd macro state  before settling mostly in the ``equilibrium space'' $\Hilbert_4$. Note that the bandwidth is roughly $2s^{-1}= 400 \approx D^{0.77}\gg D^{0.5}$ and we thus expect to be in the regime of delocalized eigenfunctions, which is also confirmed by the numerical results.
 
Theorem \ref{thm: GNT Pnu} states that the long term behaviour depicted in Figure~\ref{fig2} is typical of initial states $\psi_0\in \Hilbert_2$: 
after some time the system equilibrates, the superposition weights $\|P_\nu\psi_{2,t}\|^2$ approach values $M_{2\nu}$ independent of the initial state, and stay close to them after the initial phase of equilibration. We also see that these values differ from the ones one would expect if normal typicality would hold: for example while in our simulation $d_4/D \approx 0.90$ one finds that $M_{24} \approx 0.82$.

 The average entropy as a function of time is plotted in Figure~\ref{fig3}. As expected, it increases   up to small fluctuations.

\begin{figure}[h]
\centering
 \includegraphics[width=0.7\textwidth]{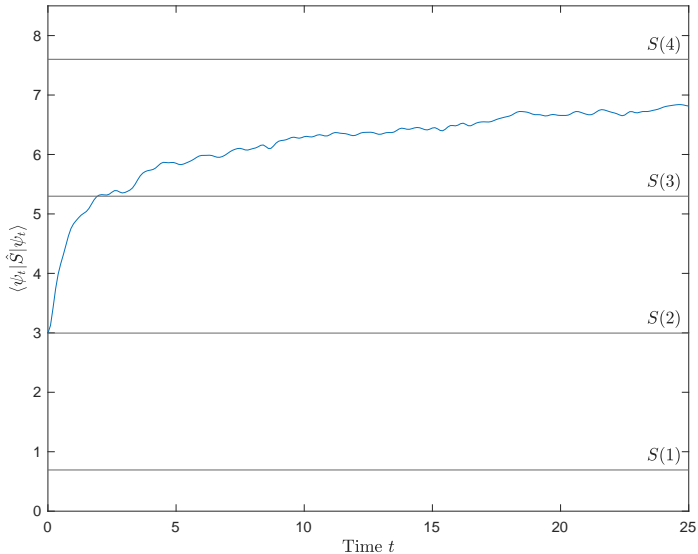}
 \caption{\label{fig3} The average entropy $\langle \psi_t|\hat{S}|\psi_t\rangle = \sum_\nu \|P_\nu\psi_t\|^2 S(\nu)$ as a function of time $t$ for $k_\mathrm{B}=1$ and the same simulation as in Figure~\ref{fig1} and Figure~\ref{fig2}. The tendency to increase can be regarded as a reflection of the second law of thermodynamics.
}
\end{figure}

\section{Proof of Theorem~\ref{thm:dyntyp}}
\label{sec:proofdyntyp}

The proof is very simple, based on an application of Chebyshev's, respectively Markov's, inequality to the following formulas for Hilbert space averages and Hilbert space variances \cite[App.~C]{GMM04}: For any Hilbert space $\Hilbert$ of dimension $d$, uniformly distributed $\psi\in\SSS(\Hilbert)$, and any operator $B$ on $\Hilbert$,
\begin{align}
\EEE \bigl[ \scp{\psi}{B|\psi} \bigr] &= \frac{1}{d} \tr B \label{Hilbertavg}\\
\Var \bigl[ \scp{\psi}{B|\psi} \bigr] &= \frac{1}{d(d+1)} \Bigl(\tr(B^\dagger B)-\frac{|\tr B|^2}{d} \Bigr)\,.\label{Hilbertvar}
\end{align}
(As usual, the variance of a complex random variable $Z$ is defined as Var $Z:= \mathbb{E}\bigl[|Z-\mathbb{E}(Z)|^2\bigr] = \mathbb{E}\bigl[|Z|^2\bigr]-|\mathbb{E}(Z)|^2$.)
Dropping the last term and replacing $d+1$ by $d$, we obtain the trivial upper bound
\be\label{Varbound1}
\Var \bigl[ \scp{\psi}{B|\psi} \bigr] \leq \frac{\tr(B^\dagger B)}{d^2} \,.
\ee
Now we insert $\Hilbert_\mu$ for $\Hilbert$ and $B=P_\mu \exp(iHt) P_\nu \exp(-iHt) P_\mu$; we write $\EEE_\mu$ and $\Var_\mu$ for expectation and variance over uniformly distributed $\psi_0\in\SSS(\Hilbert_\mu)$. We observe first that
\be 
\EEE_\mu \bigl[ \|P_\nu \psi_t\|^2 \bigr] 
= \frac{1}{d_\mu} \tr \bigl[P_\mu \exp(iHt) P_\nu \exp(-iHt) \bigr]
= w_{\mu\nu}(t)\,.
\ee
For the variance, since $|\tr(CD)|\leq \|C\| \tr(|D|)$ for any operators $C,D$ and $\|C\|$ the operator norm of $C$ \cite[Thm.~3.7.6]{simon}, we have that
\begin{align}
\tr(B^\dagger B) &= \tr \Bigl(P_\mu \exp(-iHt) P_\nu \exp(iHt) P_\mu \exp(iHt) P_\nu \exp(-iHt) P_\mu \Bigr)\\
&\leq \|P_\mu\| \, \|\exp(-iHt)\| \,\|P_\nu\| \, \|\exp(iHt)\| \cdots \|\exp(-iHt)\| \, \tr P_\mu\\
&= d_\mu\,.
\end{align}
We thus obtain that 
\be\label{Varbound4}
\Var_\mu \bigl[ \|P_\nu \psi_t\|^2 \bigr] \leq \frac{1}{d_\mu} \,.
\ee
The Chebyshev inequality then yields the first claim, \eqref{dyntyp1}.

For the second claim, Fubini's theorem allows us to interchange expectation and integral. Thus, 
\begin{align}
    \EEE_\mu \biggl[\int_0^T \bigl| \|P_\nu\psi_t\|^2 - w_{\mu\nu}(t) \bigr|^2\, dt\biggr]
    &=  \int_0^T \EEE_\mu \Bigl[ \bigl| \|P_\nu\psi_t\|^2 - w_{\mu\nu}(t) \bigr|^2\Bigr]\, dt\\
    &= \int_0^T \Var_\mu \bigl[ \|P_\nu\psi_t\|^2 \bigr]\, dt\\
    &\leq \frac{T}{d_\mu}
\end{align}
by \eqref{Varbound4}.
Markov's inequality then yields the second claim, \eqref{dyntyp2}.\hfill$\square$

\medskip

As a side remark, the arguments of the proof also yield the following upper bound on the Hilbert space variance over subspaces of dimension $d_\mu$ for arbitrary $B$:
\be
\Var_\mu \bigl[ \scp{\psi}{B|\psi} \bigr] \leq \frac{\tr(P_\mu B^\dagger P_\mu B P_\mu)}{d_\mu^2}
\leq \frac{\|B\| \, \tr(|B|)}{d_\mu^2}\,.
\ee

\section{More General Results}
\label{sec:proofstrategies}

\subsection{Dynamical Typicality}

Here is a variant of Theorem~\ref{thm:dyntyp} that allows for an arbitrary operator $B$ instead of $P_\nu$ and provides a tighter error bound:

\begin{thm}\label{thm:dyntypA}
Let $\mu,\nu$ be arbitrary macro states and let $B$ be any operator on $\Hilbert$. There is a function $w_{\mu B}: \mathbb{R}\to[0,1]$ such that for every $t\in\mathbb{R}$ and every $\varepsilon>0$, for $(1-\varepsilon)$-most $\psi_0\in\mathbb{S}(\Hilbert_\mu)$,
\be
\Bigl| \langle\psi_t|B|\psi_t\rangle-w_{\mu B}(t)\Bigr|\leq \min \Biggl\{ \frac{\|B\|}{\sqrt{\varepsilon d_\mu}},~ \sqrt{\frac{\|B\|\tr(|B|)}{\varepsilon d_\mu^2}},~ \sqrt{\frac{18\pi^3 \log(4/\varepsilon)}{d_\mu}}  \|B\| \Biggr\}.\label{dyntyp1A}
\ee
Moreover, for every $\mu$ and $B$, every $T>0$, and $(1-\varepsilon)$-most $\psi_0\in\mathbb{S}(\Hilbert_\mu)$,
\be
\frac{1}{T}\int_0^T \!  \bigl| \langle\psi_t|B|\psi_t\rangle-w_{\mu B}(t)\bigr|^2  dt \leq \frac{\|B\|^2}{\varepsilon d_\mu} \,.\label{dyntyp2A}
\ee
\end{thm}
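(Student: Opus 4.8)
The plan is to prove both parts by the same second-moment-plus-concentration scheme used for Theorem~\ref{thm:dyntyp}, taking as the deterministic reference curve
$w_{\mu B}(t):=\EEE_\mu\bigl[\scp{\psi_t}{B|\psi_t}\bigr]=\tfrac{1}{d_\mu}\tr\bigl(P_\mu \exp(iHt)\,B\,\exp(-iHt)\bigr)$.
The key preliminary observation is that $\scp{\psi_t}{B|\psi_t}=\scp{\psi_0}{B_t|\psi_0}$ with $B_t:=\exp(iHt)\,B\,\exp(-iHt)$, and that unitary conjugation preserves both the operator norm, $\|B_t\|=\|B\|$, and the trace norm, $\tr(|B_t|)=\tr(|B|)$. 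Hence every bound derived at fixed $t$ will be uniform in $t$, which is exactly what the statement requires.

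First I would establish the two Chebyshev bounds. Applying the subspace variance estimate recorded as a side remark at the end of the proof of Theorem~\ref{thm:dyntyp} to $B_t$ in place of $B$ gives $\Var_\mu[\scp{\psi_t}{B|\psi_t}]\le \tr(P_\mu B_t^\dagger P_\mu B_t P_\mu)/d_\mu^2$. Bounding the numerator in two ways -- crudely by $\tr\bigl(P_\mu(B_t^\dagger P_\mu B_t)P_\mu\bigr)\le \|B_t^\dagger P_\mu B_t\|\,\tr P_\mu\le \|B\|^2 d_\mu$, and sharply by $\|B_t\|\,\tr(|B_t|)=\|B\|\,\tr(|B|)$ via $|\tr(CD)|\le\|C\|\,\tr(|D|)$ -- yields $\Var_\mu\le \|B\|^2/d_\mu$ and $\Var_\mu\le \|B\|\,\tr(|B|)/d_\mu^2$, respectively. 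Chebyshev's inequality turns these into the first two entries of the minimum in \eqref{dyntyp1A}, with the respective exceptional sets having measure at most $\varepsilon$.

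The third entry, with its $\sqrt{\log(4/\varepsilon)}$ dependence, cannot come from a second moment, and I would obtain it from measure concentration (Lévy's lemma) on $\SSS(\Hilbert_\mu)\cong S^{2d_\mu-1}$. Writing $B_t=H_1+iH_2$ with $H_1,H_2$ Hermitian and $\|H_1\|,\|H_2\|\le\|B\|$, the real-valued maps $\psi\mapsto\scp{\psi}{H_j|\psi}$ become, after the standard real parametrization of the complex sphere, quadratic forms whose gradient on the unit sphere has norm at most $2\|H_j\|\le 2\|B\|$, so each is $2\|B\|$-Lipschitz. Applying the standard concentration inequality (with a constant of order $\pi^3$ in the exponent) to each component, combining the real and imaginary deviations by a union bound (this produces the factor $4$ inside the logarithm and the doubling of the exponential constant), and solving the resulting tail bound $\le\varepsilon$ for the deviation threshold gives the claimed $\sqrt{18\pi^3\log(4/\varepsilon)/d_\mu}\,\|B\|$. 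I expect this to be the main obstacle: it requires pinning down the Lipschitz constant of the complex quadratic form correctly and matching the numerical constant, and the textbook Lévy lemma controls deviations from the median rather than the mean, so a median-to-mean passage (or a version stated directly for the mean) must be invoked.

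Finally, to assemble \eqref{dyntyp1A} I would note that each of the three thresholds $\lambda_i$ fails to bound the deviation only on a set of measure at most $\varepsilon$; choosing the index achieving $\min_i\lambda_i$, the corresponding good set already has measure $\ge 1-\varepsilon$ and on it the deviation is $\le\min_i\lambda_i$, so no union bound over the three is needed. The time-averaged bound \eqref{dyntyp2A} is then immediate by the argument of Theorem~\ref{thm:dyntyp}: Fubini interchanges $\EEE_\mu$ with $\int_0^T$, the integrand becomes $\Var_\mu[\scp{\psi_t}{B|\psi_t}]\le\|B\|^2/d_\mu$ uniformly in $t$, so the expected time average is $\le\|B\|^2/d_\mu$, and Markov's inequality yields \eqref{dyntyp2A}.
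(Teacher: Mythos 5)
Your proposal follows essentially the same route as the paper's proof: the same reference curve $w_{\mu B}(t)=\tfrac{1}{d_\mu}\tr\bigl(P_\mu e^{iHt}Be^{-iHt}\bigr)$, the same two variance estimates ($\|B\|^2/d_\mu$ and $\|B\|\tr(|B|)/d_\mu^2$) fed into Chebyshev for the first two bounds in \eqref{dyntyp1A}, L\'evy's lemma for the third, and Fubini plus Markov for \eqref{dyntyp2A}. The one divergence is the L\'evy step: the paper simply applies its Lemma~\ref{lemma:Levy} with $\eta(f)=2\|B\|$, so that $18\pi^3=\tfrac{9\pi^3}{2}\cdot 4$ and the $4$ inside the logarithm is already built into the lemma (it is the prefactor of the underlying Gaussian tail, not a union-bound artifact), whereas your union bound over real and imaginary parts would yield $\log(8/\varepsilon)$ and a constant twice as large, hence not the stated constant --- a mismatch you flag yourself, and one that disappears when $B$ is Hermitian, which is the case the paper implicitly treats (for general $B$ the quantity $\scp{\psi_t}{B|\psi_t}$ is complex and the paper's own application of the real-valued lemma is equally informal).
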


In fact, the function $w_{\mu B}(t)$ is the average of $\langle\psi_t|B|\psi_t\rangle$ over $\psi_0\in\mathbb{S}(\Hilbert_\mu)$, which is
\be
w_{\mu B}(t) := \frac{1}{d_\mu} \tr\left[P_\mu \exp(iHt) B \exp(-iHt)\right].
\ee
The proof of Theorem~\ref{thm:dyntypA} (see Section~\ref{sec:proofdyntypA}) is largely analogous to that of Theorem~\ref{thm:dyntyp}. The bound involving $\sqrt{\log(1/\varepsilon)}$ instead of $1/\sqrt{\varepsilon}$ can be obtained by using L\'evy's lemma instead of the Chebyshev inequality. However, it turns out that for all other results in this paper, the bounds provided by Markov's and Chebyshev's inequality are better than those provided by L\'evy's lemma. That is because in many cases, L\'evy's lemma yields a bound that is better in $\varepsilon$ but worse in $d_\mu$, which in our situation is worse because $d_\mu$ is usually way larger than any relevant $1/\varepsilon$; see Section~\ref{sec:proofdyntypA} for more detail.

\subsection{Generalized Normal Typicality}

The next result, Theorem~\ref{thm: GNT A}, provides a somewhat more general version of Theorem~\ref{thm: GNT Pnu} that concerns arbitrary operators $B$ instead of $P_\nu$, as well as finite time intervals instead of $[0,\infty)$. To formulate it, we define
the number $d_E:=\#\mathcal{E}$ of distinct eigenvalues and
the maximal number of gaps in an energy  interval of length $\kappa>0$, 
\begin{align}
 G(\kappa) := \max_{E\in\mathbb{R} } \:\#\big\{&(e,e') \in \mathcal{E}\times \mathcal{E}\,:\, e\not=e'  \mbox{ and }e-e' \in [E,E+\kappa)\big\}\,.
\end{align}
It follows that $D_G = \lim_{\kappa \to 0^+} G(\kappa)$.
 
\begin{thm}\label{thm: GNT A}
Let $B$ be an operator on $\Hilbert$, let $\varepsilon,\delta,\kappa,T>0$, let $\mu$ be any macro state, and define  
\begin{align}
    M_{\mu B} &:= \frac{1}{d_\mu} \sum_{e\in\mathcal{E}} \tr\left(P_\mu \prj_e B \prj_e\right)\,.
\end{align}
Then $(1-\varepsilon)$-most $\psi_0\in\mathbb{S}(\Hilbert_\mu)$ are such that for $(1-\delta)$-most $t\in[0,T]$
\begin{align}\label{ineq: GNT A}
    \biggl|\langle&\psi_t|B|\psi_t\rangle - M_{\mu B} \biggr|  \leq\\ \nonumber&
      4 \,\sqrt{\frac{D_E \,G(\kappa) \|B\|}{\delta\varepsilon d_\mu} \left(1+\frac{8\log_2 d_E}{\kappa T}\right) \min\left\{\|B\|, \frac{\tr(|B|)}{d_\mu}\right\}}.
\end{align}
\end{thm}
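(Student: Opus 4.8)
The plan is to collapse the nested ``$(1-\varepsilon)$-most $\psi_0$, $(1-\delta)$-most $t$'' statement into a single bound on the joint average of $\bigl|\scp{\psi_t}{B|\psi_t}-M_{\mu B}\bigr|^2$ over both $\psi_0\in\SSS(\Hilbert_\mu)$ and $t\in[0,T]$, and then to recover the quantified form by applying Markov's inequality twice. Concretely, if I can establish
\[
V:=\EEE_\mu\!\left[\frac1T\int_0^T\bigl|\scp{\psi_t}{B|\psi_t}-M_{\mu B}\bigr|^2\,dt\right]\le 16\,\frac{D_E\,G(\kappa)\|B\|}{d_\mu}\Bigl(1+\tfrac{8\log_2 d_E}{\kappa T}\Bigr)\min\Bigl\{\|B\|,\tfrac{\tr(|B|)}{d_\mu}\Bigr\},
\]
then Markov over $\psi_0$ shows that for $(1-\varepsilon)$-most $\psi_0$ the inner time average is $\le V/\varepsilon$, and a second Markov over $t\in[0,T]$ upgrades this to $\bigl|\scp{\psi_t}{B|\psi_t}-M_{\mu B}\bigr|\le\sqrt{V/(\varepsilon\delta)}$ for $(1-\delta)$-most $t$, which is exactly \eqref{ineq: GNT A} (note $\sqrt{16}=4$). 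Thus the entire task is the bound on $V$.

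First I would identify $M_{\mu B}$ with the double average it is claimed to be. Writing $\psi_t=\sum_e e^{-iet}\prj_e\psi_0$ and $v_{ee'}:=\scp{\prj_e\psi_0}{B|\prj_{e'}\psi_0}$, the diagonal ($e=e'$) part is the infinite-time average $\overline B(\psi_0):=\scp{\psi_0}{\mathcal D(B)|\psi_0}$ with the energy dephasing $\mathcal D(B):=\sum_e\prj_e B\prj_e$, and \eqref{Hilbertavg} applied on $\Hilbert_\mu$ gives $\EEE_\mu[\overline B(\psi_0)]=\tfrac1{d_\mu}\tr(P_\mu\mathcal D(B))=M_{\mu B}$. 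I would then split $\scp{\psi_t}{B|\psi_t}-M_{\mu B}=[\scp{\psi_t}{B|\psi_t}-\overline B(\psi_0)]+[\overline B(\psi_0)-M_{\mu B}]$ and use $|a+b|^2\le 2|a|^2+2|b|^2$, so that $V$ is controlled by an \emph{equilibration term} $\EEE_\mu[\tfrac1T\int_0^T|\sum_{e\ne e'}v_{ee'}e^{i(e-e')t}|^2dt]$ plus a $t$-independent \emph{static term} $\Var_\mu[\overline B(\psi_0)]$. The static term is a pure Hilbert-space variance of the quadratic form $\scp{\psi_0}{\mathcal D(B)|\psi_0}$, which I bound by the side-remark inequality $\Var_\mu\le\tr(P_\mu\mathcal D(B)^\dagger P_\mu\mathcal D(B)P_\mu)/d_\mu^2$ in two complementary ways — crudely by $\|\mathcal D(B)\|^2/d_\mu\le\|B\|^2/d_\mu$, and via the fact that pinching contracts the trace norm ($\tr|\mathcal D(B)|\le\tr|B|$) by $\|B\|\tr(|B|)/d_\mu^2$ — giving $\tfrac{\|B\|}{d_\mu}\min\{\|B\|,\tr(|B|)/d_\mu\}$, already within the target.

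The equilibration term is the heart of the argument and follows the finite-time technique of Short and Farrelly. Expanding the time integral produces $\sum_{e\ne e',\,f\ne f'}v_{ee'}\overline{v_{ff'}}\,R_T\bigl((e-e')-(f-f')\bigr)$ with $R_T(\omega)=\tfrac1T\int_0^T e^{i\omega t}dt$ satisfying $|R_T(\omega)|\le\min\{1,2/(|\omega|T)\}$. Reading this as a bilinear form in $(v_{ee'})_{e\ne e'}$ and applying the Schur test bounds it by $\bigl(\max_{(e,e')}\sum_{(f,f')}|R_T(\cdots)|\bigr)\sum_{e\ne e'}|v_{ee'}|^2$; binning the gaps into windows of width $\kappa$ makes the within-window count at most $G(\kappa)$, while distant windows are damped by the $2/(|\omega|T)$ tail, and summing that tail over the at most $\sim d_E^2$ distinct gaps via a dyadic grouping yields the factor $G(\kappa)\bigl(1+\tfrac{8\log_2 d_E}{\kappa T}\bigr)$. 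It then remains to average $\sum_{e\ne e'}|v_{ee'}|^2\le\tr(\overline\rho\,B\,\overline\rho\,B^\dagger)$ (with $\overline\rho=\mathcal D(\pr{\psi_0})$) over $\psi_0$: the second-moment formula $\EEE_\mu[|\scp{\psi_0}{A|\psi_0}|^2]=(|\tr\tilde A|^2+\tr(\tilde A\tilde A^\dagger))/(d_\mu(d_\mu+1))$, with $\tilde A=P_\mu\prj_e B\prj_{e'}P_\mu$, splits the average into two sums. The $\tr(\tilde A\tilde A^\dagger)$ sum telescopes through $\sum_e\prj_e P_\mu\prj_e=\mathcal D(P_\mu)$ to $\tr(\mathcal D(P_\mu)B\mathcal D(P_\mu)B^\dagger)\le\|B\|\,d_\mu\min\{\|B\|,\tr(|B|)/d_\mu\}$, while the $|\tr\tilde A|^2$ sum is handled by Cauchy–Schwarz per $(e,e')$-block together with the rank bound $\|\prj_e P_\mu\prj_{e'}\|_{\mathrm{HS}}^2\le D_E$, giving $\le D_E\,\tr(B^\dagger B)\le D_E\|B\|\tr(|B|)$. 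Dividing by $d_\mu^2$ and combining with the alternative estimate $\tr(\overline\rho B\overline\rho B^\dagger)\le\|B\|^2\tr(\overline\rho^2)$ plus $\EEE_\mu[\tr\overline\rho^2]\le 2D_E/d_\mu$ (from the same moment formula using $\sum_e\tr(P_\mu\prj_e)=d_\mu$ and $\max_e\tr(P_\mu\prj_e)\le D_E$) yields $\EEE_\mu[\sum_{e\ne e'}|v_{ee'}|^2]\le\tfrac{2D_E\|B\|}{d_\mu}\min\{\|B\|,\tr(|B|)/d_\mu\}$.

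The hardest step is the finite-time gap estimate — the Schur-test control of $\max_{(e,e')}\sum_{(f,f')}|R_T(\cdots)|$, and especially extracting the sharp $\log_2 d_E$ from the harmonic tail via dyadic binning; by contrast, the $\psi_0$-average, though calculationally heavy, is routine once the identity $\sum_{e\ne e'}|v_{ee'}|^2=\tr(\overline\rho B\overline\rho B^\dagger)$ and the second-moment formula are in place. Assembling the two terms gives $V\le 6\,D_E G(\kappa)\bigl(1+\tfrac{8\log_2 d_E}{\kappa T}\bigr)\tfrac{\|B\|}{d_\mu}\min\{\|B\|,\tr(|B|)/d_\mu\}$, comfortably below the displayed bound, and the two Markov steps then complete the proof.
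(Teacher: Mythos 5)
Your proposal is correct, and its core machinery coincides with the paper's: the Short--Farrelly bound $\|R\|\le G(\kappa)\bigl(1+\tfrac{8\log_2 d_E}{\kappa T}\bigr)$ (which the paper also cites from \cite{SF12} rather than proves, so your Schur-test/dyadic sketch is at a comparable level of detail), the Hilbert-space second-moment formula of Lemma~\ref{lem: average}, and the same trace estimates that underlie Proposition~\ref{SM prop upper bounds}. The differences are in the packaging. First, you bound the single joint moment $V=\EEE_\mu\bigl\langle|\scp{\psi_t}{B|\psi_t}-M_{\mu B}|^2\bigr\rangle_T$, splitting off $M_{\psi_0 B}$ via $|a+b|^2\le 2|a|^2+2|b|^2$, and then apply Markov twice (over $\psi_0$, then over $t$); the paper instead budgets $\varepsilon/2$ for a Markov bound on the time variance around $M_{\psi_0 B}$ and $\varepsilon/2$ for a Chebyshev bound on $|M_{\psi_0 B}-M_{\mu B}|$, recombining with the triangle inequality. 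Your bookkeeping is arguably cleaner and delivers the constant $\sqrt{6}<4$ directly; both versions implicitly need $D_E\,G(\kappa)\bigl(1+\tfrac{8\log_2 d_E}{\kappa T}\bigr)$ not to be tiny (e.g.\ $H$ should have at least two distinct eigenvalues) in order to absorb the static term into the displayed bound, an edge condition shared with the paper. Second, for the $\|B\|^2$-half of the min in the equilibration term you pass through $\sum_{e\neq e'}|v_{ee'}|^2\le\|B\|^2\tr(\overline{\rho}^{\,2})$ and $\EEE_\mu\tr(\overline{\rho}^{\,2})=\EEE_\mu d_{\textup{eff}}^{-1}\le 2D_E/d_\mu$, i.e.\ exactly the effective-dimension route that the paper presents separately in Section~\ref{sec:SF} as an ``alternative estimate,'' while you obtain the $\tr(|B|)/d_\mu$-half by the paper's direct Cauchy--Schwarz estimate; the paper's Proposition~\ref{SM prop upper bounds} proves both halves directly, but the outcomes agree. (This hybrid is necessary in your setup: the $D_E\tr(B^\dagger B)$ bound alone would not yield the $\|B\|^2$-half, since $\tr(B^\dagger B)$ need not be $\le\|B\|^2 d_\mu$; you correctly noticed this and supplied the second route.) Third, your static-term bound via the side-remark variance inequality plus the facts that the pinching $B\mapsto\sum_e\prj_e B\prj_e$ contracts both the operator and the trace norm is a compact equivalent of the paper's explicit double-sum estimates. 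I find no gaps.
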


Thus, as soon as 
 $d_\mu \gg D_E G(\kappa) \|B\|^2$
and $T$ is large enough, 
the right-hand side of \eqref{ineq: GNT A} is  small and the expectation $\langle\psi_t|B|\psi_t\rangle$ is close to a fixed value $M_{\mu B}$ for most times $t\in [0,T]$ and most initial states $\psi_0\in\mathbb{S}(\Hilbert_\mu)$. 
However, the times $T$ required to make the right-hand side of \eqref{ineq: GNT A} small are usually extremely large. For example, for a system of $N$ particles, $\Hilbert$
has dimension of the order $\exp(N)$; provided that no eigenvalue is hugely degenerate, there are of the order $\exp(N)$ energy eigenvalues. In order to obtain a small error, we need to keep $G(\kappa)$ small. For $\kappa\sim\exp(-N)\Delta E$, already the number of nearest-neighbor gaps with $e-e'\in[0,\kappa)$ will be of order $\exp(N)$, and will thus contribute of order $\exp(N)$ to $G(\kappa)$. So, we need $\kappa \ll \exp(-N)$ and therefore $T \gg \exp(N)$
to obtain a small error in \eqref{ineq: GNT A}.

For the proof of Theorems~\ref{thm: GNT Pnu} and \ref{thm: GNT A} we need, besides Hilbert space averages and variances, also Hilbert space covariances of two operators. The covariance of two complex random variables $X,Y$ is to be understood as
\begin{align}
\mathrm{Cov}[X,Y] &:= \EEE \bigl[ (X-\EEE X)^* (Y- \EEE Y) \bigr]\\
&= \EEE [X^*Y]- (\EEE X)^* \, \EEE Y\,.
\end{align}

\begin{lemma}[Hilbert Space Covariance]\label{lem: average}
For uniformly distributed $\psi\in\SSS(\Hilbert)$ with $\dim \Hilbert=d$ and any two operators $B,C$ on $\Hilbert$,
\begin{align}
\mathrm{Cov}\Bigl[ \scp{\psi}{B|\psi}, \scp{\psi}{C|\psi} \Bigr]
&= \frac{\tr(B^\dagger C)}{d(d+1)}- \frac{\tr(B^\dagger ) \tr(C)}{d^2(d+1)}\,.\label{covBC1}
\end{align}
\end{lemma}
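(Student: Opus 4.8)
The plan is to reduce the covariance to a single fourth moment of the uniform measure and then contract it. Writing out $\mathrm{Cov}[\scp{\psi}{B|\psi},\scp{\psi}{C|\psi}]=\EEE[\scp{\psi}{B|\psi}^*\scp{\psi}{C|\psi}]-(\EEE\scp{\psi}{B|\psi})^*\,\EEE\scp{\psi}{C|\psi}$ and using \eqref{Hilbertavg} together with the identity $\overline{\scp{\psi}{B|\psi}}=\scp{\psi}{B^\dagger|\psi}$, the subtracted product of averages immediately becomes $\tr(B^\dagger)\tr(C)/d^2$. The entire content of the lemma therefore lies in evaluating the first expectation, which equals $\EEE\bigl[\scp{\psi}{B^\dagger|\psi}\,\scp{\psi}{C|\psi}\bigr]$.

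To this end I would expand both quadratic forms in an orthonormal basis, so that this expectation equals $\sum_{ijkl}(B^\dagger)_{ij}C_{kl}\,T_{ijkl}$ with the fourth-moment tensor $T_{ijkl}:=\EEE[\overline{\psi_i}\psi_j\overline{\psi_k}\psi_l]$. The key step is the evaluation of $T$. Since $u$ is invariant under the unitary group, $T$ must be assembled from the only invariant tensors pairing the two barred indices $i,k$ with the two unbarred indices $j,l$, namely $\delta_{ij}\delta_{kl}$ and $\delta_{il}\delta_{kj}$; invariance of $\overline{\psi_i}\psi_j\overline{\psi_k}\psi_l$ under exchanging the two complex-conjugated factors ($i\leftrightarrow k$) forces equal coefficients, so $T_{ijkl}=\alpha(\delta_{ij}\delta_{kl}+\delta_{il}\delta_{kj})$. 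The single constant $\alpha$ is then fixed by setting $i=j$, $k=l$ and summing: the left side becomes $\EEE[\|\psi\|^4]=1$ while the right side becomes $\alpha\,d(d+1)$, so $\alpha=1/(d(d+1))$.

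Substituting this tensor and contracting the Kronecker deltas against $B^\dagger$ and $C$ turns the two terms into $\tr(B^\dagger)\tr(C)$ (from $\delta_{ij}\delta_{kl}$) and $\tr(B^\dagger C)$ (from $\delta_{il}\delta_{kj}$), both divided by $d(d+1)$. Subtracting $\tr(B^\dagger)\tr(C)/d^2$ and combining the two coefficients of $\tr(B^\dagger)\tr(C)$ via $\tfrac{1}{d(d+1)}-\tfrac{1}{d^2}=-\tfrac{1}{d^2(d+1)}$ yields exactly \eqref{covBC1}. The only genuinely delicate step is the justification of the invariant form of $T_{ijkl}$; everything else is bookkeeping.

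If one prefers to avoid recomputing a moment, there is a shorter route that I would mention as an alternative: the claimed right-hand side of \eqref{covBC1} and the covariance are both sesquilinear in $(B,C)$ — conjugate-linear in $B$ through $B\mapsto B^\dagger$ and linear in $C$ — and they agree on the diagonal $B=C$ by the already-established variance formula \eqref{Hilbertvar}. Since a sesquilinear form over $\mathbb{C}$ is determined by its diagonal through the polarization identity $s(B,C)=\tfrac14\sum_{k=0}^3 i^{-k}\,Q(B+i^k C)$ with $Q(B)=s(B,B)$, the two forms coincide for all $B,C$. This reduces the lemma entirely to \eqref{Hilbertvar}, which the excerpt has already derived.
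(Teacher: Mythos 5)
Your proposal is correct, and both of its routes deliver the lemma. Your main route is structurally the same computation as the paper's: expand $\scp{\psi}{B|\psi}^*\scp{\psi}{C|\psi}$ in an orthonormal basis, identify the fourth-moment tensor as $\bigl(\delta_{ij}\delta_{kl}+\delta_{il}\delta_{kj}\bigr)/\bigl(d(d+1)\bigr)$, contract, and subtract $\tr(B^\dagger)\tr(C)/d^2$. The difference lies in how that tensor is justified: the paper simply cites the moment formulas \eqref{expectations a_j} from von Neumann and Gemmer--Mahler--Michel, whereas you derive the tensor's form from unitary invariance of the uniform measure (the Schur--Weyl-type fact that the invariant tensors pairing barred with unbarred indices are spanned by the two delta pairings), fix equality of the two coefficients by the $i\leftrightarrow k$ symmetry, and normalize via $\EEE\|\psi\|^4=1$. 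This is more self-contained, at the price of resting on the invariant-theory fact you correctly flag as the one delicate step; in a fully rigorous write-up that commutant statement would itself need proof or citation, so the net reliance on outside input is comparable. Your polarization alternative is genuinely different from the paper: since both sides of \eqref{covBC1} are sesquilinear in $(B,C)$ (conjugate-linear in $B$ via $B\mapsto B^\dagger$, linear in $C$) and agree on the diagonal $B=C$ by \eqref{Hilbertvar}, the identity $s(B,C)=\tfrac14\sum_{k=0}^3 i^{-k}\,s(B+i^kC,B+i^kC)$ forces them to agree everywhere. This is the shortest correct route, but note what it buys and costs: it requires no new moment computation at all, yet it transfers the entire burden to the variance formula \eqref{Hilbertvar}, which the paper states only by citation; the paper's (and your main) computation instead yields \eqref{Hilbertvar} as the special case $C=B$, so it proves strictly more from the same starting point.
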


Put differently,
\be\label{covBC2}
\EEE \bigl[ \scp{\psi}{B|\psi}^* \scp{\psi}{C|\psi} \bigr] = \frac{\tr(B^\dagger)\tr(C)+\tr(B^\dagger C)}{d(d+1)} \,.
\ee
By inserting $\Hilbert_\mu$ for $\Hilbert$, it follows that for uniformly distributed $\psi\in\SSS(\Hilbert_\mu)$ and any two operators $B,C$ on $\Hilbert$, 
\begin{align}
        \mathbb{E}_\mu\big[&\langle\psi|B^\dagger |\psi\rangle\langle \psi|C|\psi\rangle\big] = \label{covBC3}\\& \frac{1}{d_\mu(d_\mu+1)} \big(\tr(P_\mu B^\dagger ) \tr(P_\mu C) + \tr(P_\mu B^\dagger  P_\mu C)\big)\,.\nonumber
\end{align}

\section{Realistic Dimensions and Entropy}
\label{sec:realistic}

As indicated before, for a system of $N$ particles or more generally of $N$ degrees of freedom the dimension $D$ is of order $\exp(N)$.
We actually expect $D\approx \exp(s_\mathrm{eq} N/k_\mathrm{B})$, where $s_\mathrm{eq}$ is the entropy per particle in the thermal equilibrium state, and accordingly for all macro spaces $\Hilbert_\mu$,
\begin{align}\label{dimensions}
d_\mu =
\exp(s_\mu N/k_\mathrm{B}) \,.
\end{align}
The following corollary to Theorem~\ref{thm: GNT Pnu} shows that in this situation and assuming that no eigenvalues or gaps are macroscopically degenerate, fluctuations of the time-dependent superposition weights around their expected values are exponentially small in the number of particles with a rate controlled by the entropy per particle in the initial macro state. 

\begin{cor}\label{cor: GNT}
 Assume \eqref{dimensions}. 
 Then, for all   macro states $\mu,\nu_- ,\nu_+ $ with 
 \begin{align}
 s_{\nu_-} \leq s_\mu \leq s_{\nu_+}
 \end{align}
  it holds for $(1-\varepsilon)$-most $\psi_0\in\mathbb{S}(\Hilbert_\mu)$ for $(1-\delta)$-most of the time that
 \begin{align}\label{cor1}
     \biggl| \|P_{\nu_+}\psi_t\|^2 - M_{\mu\nu_+}  \biggr| &\leq \frac{4\sqrt{D_E D_G}}{\sqrt{\varepsilon\delta}} \,\exp\left(-\frac{s_\mu N}{2k_\mathrm{B}}\right) ,\\
     \biggl| \|P_{\nu_-}\psi_t\|^2 - M_{\mu\nu_-} \biggr| &\leq \frac{4\sqrt{D_E D_G}}{\sqrt{\varepsilon\delta}} \,\exp\left( -\frac{(s_\mu -\frac{s_{\nu_-}}{2})N}{k_\mathrm{B}}\right)\,.\label{cor2}
 \end{align}
\end{cor}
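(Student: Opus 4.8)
The plan is to obtain Corollary~\ref{cor: GNT} directly from Theorem~\ref{thm: GNT Pnu}, with no new probabilistic or spectral input: the entire content is already contained in the bound \eqref{ineq: GNT Pnu}, and what remains is to insert the exponential dimension ansatz \eqref{dimensions} and resolve the minimum $\min\{1,d_\nu/d_\mu\}$ according to the entropy ordering. First I would isolate the common prefactor by pulling $d_\mu=\exp(s_\mu N/k_\mathrm{B})$ out of the square root in \eqref{ineq: GNT Pnu}, which gives
\be
4\sqrt{\frac{D_E D_G}{\delta\varepsilon d_\mu}\min\Bigl\{1,\tfrac{d_\nu}{d_\mu}\Bigr\}} = \frac{4\sqrt{D_E D_G}}{\sqrt{\varepsilon\delta}}\sqrt{\frac{1}{d_\mu}\min\Bigl\{1,\tfrac{d_\nu}{d_\mu}\Bigr\}}\,,
\ee
so that only the square-rooted dimensional factor needs to be evaluated in each of the two cases.

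For the high-entropy target $\nu_+$, the assumption $s_\mu\leq s_{\nu_+}$ together with \eqref{dimensions} gives $d_{\nu_+}\geq d_\mu$, whence $\min\{1,d_{\nu_+}/d_\mu\}=1$. The dimensional factor then reduces to $1/\sqrt{d_\mu}=\exp\!\big(-s_\mu N/(2k_\mathrm{B})\big)$, which yields \eqref{cor1} immediately. For the low-entropy target $\nu_-$, the assumption $s_{\nu_-}\leq s_\mu$ gives $d_{\nu_-}\leq d_\mu$, so the minimum equals $d_{\nu_-}/d_\mu$ and the dimensional factor becomes $\sqrt{d_{\nu_-}/d_\mu^2}$. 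Substituting $d_{\nu_-}=\exp(s_{\nu_-}N/k_\mathrm{B})$ and $d_\mu^2=\exp(2s_\mu N/k_\mathrm{B})$ and collecting the exponents, I would obtain $\exp\!\big(-(s_\mu-s_{\nu_-}/2)N/k_\mathrm{B}\big)$, giving \eqref{cor2}.

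There is no genuine obstacle here; the only step requiring care is the exponent bookkeeping in the second case, in particular keeping the factor $d_\mu^2$ (rather than $d_\mu$) in the denominator, which is what produces the rate coefficient $s_\mu-s_{\nu_-}/2$ instead of $s_\mu/2$. All the substance of the statement lives in Theorem~\ref{thm: GNT Pnu}; Corollary~\ref{cor: GNT} is merely its specialization to the physically natural regime of exponentially large macro-space dimensions, making explicit that the fluctuations are exponentially small in $N$ with a rate governed by the entropy per particle of the initial macro state.
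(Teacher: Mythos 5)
Your proposal is correct and follows essentially the same route as the paper: the paper likewise specializes the bound \eqref{ineq: GNT Pnu} by substituting $d_\mu = \exp(s_\mu N/k_\mathrm{B})$ and resolving $\min\{1, d_\nu/d_\mu\}$ according to the entropy ordering, which yields \eqref{cor1} with the factor $\exp(-s_\mu N/(2k_\mathrm{B}))$ and \eqref{cor2} with the rate $s_\mu - s_{\nu_-}/2$ exactly as in your exponent bookkeeping. Your identification of the $d_{\nu_-}/d_\mu^2$ factor as the source of the improved rate in the low-entropy case matches the paper's calculation.
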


In particular, if $s_\mu,s_{\nu_{\pm}}$ are fixed and $N\to\infty$, the error bounds are exponentially small. Note also that the numerical experiment in Figure~\ref{fig2} is consistent with the idea that the fluctuations of the superposition weights in macro spaces $\nu_+$ of larger entropy than the initial state $\mu$ are controlled by the entropy $s_\mu$ of the initial macro state, while  the fluctuations of the superposition weights in macro spaces $\nu_-$ of smaller entropy than the initial state $\mu$ are   controlled by the entropy difference $s_\mu - s_{\nu_-}/2$ and thus even smaller.
 However, from the green line in Figure~\ref{fig2} (corresponding to $\|P_{1}\psi_t\|^2$) it is also apparent that the fluctuations of $\|P_{\nu}\psi_t\|^2$ might exceed the value of $M_{\mu\nu}$. Indeed, if we assume that the weights $M_{\mu\nu}$ scale like in the case of normal typicality, i.e.,
\begin{align}
   M_{\mu\nu} \approx \frac{d_\nu}{D} \approx 
   \exp\left(-\frac{s_\mathrm{eq} - s_\nu}{k_\mathrm{B}}N\right)\,,
\end{align}
then the relative error in \eqref{cor1} is only small if $s_{\nu_+} > s_\mathrm{eq} - s_\mu/2$,
and  the relative error in \eqref{cor2} is only small if $s_{\nu_-} > 2(s_\mathrm{eq} - s_\mu)$.

More generally, the question remains under which conditions one can prove that even for $M_{\mu\nu}$ close to 0, the relative error in \eqref{ineq: GNT Pnu} and thus the relative deviation of $\|P_\nu\psi_t\|^2$ from $M_{\mu\nu}$ will be small. In a separate work \cite{TTV22-mathe}, we study this question for specific distributions of the random matrix $H$.

\section{Outline of Proof of Theorem~\ref{thm: GNT A}}
\label{sec:outlinepfGNT}

Before we provide the technical details of the proof of Theorem \ref{thm: GNT A} in Section~\ref{sec:proofs}, we explain now the main strategy and the key ideas. 
The first step is to control the time variance
\begin{align}\label{Timevariance}
\left\langle\bigl|\langle\psi_t|B|\psi_t\rangle-M_{\psi_0 B}\bigr|^2\right\rangle_T := \frac{1}{T} \int_0^T \!\! \bigl|\langle\psi_t|B|\psi_t\rangle-M_{\psi_0 B}\bigr|^2 \,dt
\end{align}
of the quantity $\langle\psi_t|B|\psi_t\rangle$, where 
\begin{equation}
M_{\psi_0 B} =\overline{\langle\psi_t|B|\psi_t\rangle} := \lim_{T\to\infty} \frac{1}{T} \int_0^T\langle\psi_t|B|\psi_t\rangle\; dt
\end{equation}
is just the time-average of $\langle\psi_t|B|\psi_t\rangle$. The time variance \eqref{Timevariance} was the subject of several earlier investigations concerning thermalization in closed quantum systems. It is  usually controlled in terms of the effective dimension  \cite{Reimann08,Short11,SF12}
\be
d_{\textup{eff}} := \Bigl(\sum_e \langle\psi_0|\prj_e|\psi_0\rangle^2 \Bigr)^{-1}
\ee
of the initial state $\psi_0$, a measure for the number of distinct energies that contribute significantly to $\psi_0$.
In Section~\ref{sec:SF} we slightly improve the bound of \cite{SF12} (relevant when $d_\nu\ll d_\mu$) so that we can show that, after averaging the initial state over $\mathbb{S}(\Hilbert_\mu)$, one obtains that
  \begin{align}\label{Timevariance2}
    \mathbb{E}_\mu&\Bigl[ \left\langle\bigl|\langle\psi_t|B|\psi_t\rangle - M_{\psi_0 B}\bigr|^2\right\rangle_T \Bigr]\leq\\& \frac{2D_E G(\kappa)}{d_\mu+1} \left(1+\frac{8\log_2 d_E}{\kappa T}\right) \min\left\{\|B\|^2, \frac{\tr(B^\dagger  B)}{d_\mu}\right\}\,.\nonumber
    \end{align}

The second step is to show that $M_{\psi_0 B}$ is very close to $M_{\mu B}$ for most states $\psi_0\in\SSS(\Hilbert_\mu)$. To this end we observe that    
$ \mathbb{E}_\mu (M_{\psi_0 B})= M_{\mu B} $    and then bound the variance according to
 \begin{align}\label{muvariance}
   \mathbb{E}_\mu\Bigl[  (M_{\psi_0 B} - M_{\mu B})^2 \Bigr] &\leq \frac{\|B\|}{d_\mu+1} \min\left\{\|B\|, \frac{\tr(|B|)}{d_\mu} \right\}    \,.\end{align}
 A careful application of Markov's inequality then shows that \eqref{Timevariance2} and \eqref{muvariance} together imply \eqref{ineq: GNT A}.

\section{Remaining Proofs}
\label{sec:proofs}

\subsection{Proof of Theorem~\ref{thm:dyntypA}}
\label{sec:proofdyntypA}

The phenomenon of concentration of measure, i.e., that on a sphere in high dimension, ``nice'' functions are nearly constant, is often expressed by means of (e.g., \cite[Sec.~II.C]{SWGW22})

\begin{lemma}[L\'evy's Lemma]\label{lemma:Levy}
For any Hilbert space $\Hilbert$ with dimension $d$, any $f:\SSS(\Hilbert)\to\RRR$ with Lipschitz constant $\eta(f)$, and any $\varepsilon>0$,
\be\label{Levy}
|f(\psi)-\EEE f| \leq \sqrt{\frac{9\pi^3 \log(4/\varepsilon)}{2d}}\, \eta(f) 
\ee
for $(1-\varepsilon)$-most $\psi\in\SSS(\Hilbert)$.
\end{lemma}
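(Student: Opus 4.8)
The plan is to prove this as an instance of the concentration-of-measure phenomenon for Lipschitz functions on high-dimensional spheres. First I would identify the complex unit sphere $\SSS(\Hilbert)$ with the real Euclidean sphere $S^{2d-1}\subset\RRR^{2d}$ (since $\dim_{\mathbb C}\Hilbert=d$ forces real dimension $2d$), under which identification the uniform measure $u$ becomes the normalized surface measure. I would note that $f$ is $\eta(f)$-Lipschitz with respect to the chordal metric $\|\psi-\phi\|$, and since the chordal distance $2\sin(\theta/2)$ is dominated by the geodesic (angular) distance $\theta$ on the unit sphere, $f$ is also $\eta(f)$-Lipschitz for the geodesic metric, which is the metric in which the standard spherical estimates are phrased. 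This reduces the claim to a purely real-geometric statement in ambient dimension $2d$.

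For the core estimate I would invoke L\'evy's isoperimetric inequality on the sphere: among all measurable sets of a given measure, spherical caps minimize the measure of their $r$-enlargement. Applying this to the sublevel set $A=\{f\le m_f\}$, where $m_f$ is a median of $f$ (so $u(A)\ge\tfrac12$), its $r$-neighborhood $A_r$ has measure at least that of the enlargement of a half-sphere, so the complement of $A_r$ is no larger than an opposite cap. A clean (non-optimal) bound on the normalized measure of such a cap in $\RRR^{2d}$, together with the Lipschitz inclusion $A_r\subseteq\{f\le m_f+\eta(f)r\}$, gives
\[
u\bigl(\{\psi: f(\psi)>m_f+\eta(f)\,r\}\bigr)\;\le\; u\bigl(\SSS(\Hilbert)\setminus A_r\bigr)\;\le\; \exp\!\Bigl(-\tfrac{2d\,r^2}{9\pi^3}\Bigr),
\]
and the same argument applied to $\{f\ge m_f\}$ controls the lower tail, so that summing yields concentration around the median with prefactor $2$. (Alternatively, the identical sub-Gaussian shape can be obtained without isoperimetry, either from the log-Sobolev inequality on the sphere via Herbst's argument, or by realizing a uniform point as $G/\|G\|$ for a standard Gaussian $G$ and transferring Gaussian concentration; each route gives the same form up to the absolute constant.)

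Finally I would convert from the median $m_f$ to the mean $\EEE f$ appearing in the statement. Integrating the median tail bound shows $|\EEE f-m_f|\le C\,\eta(f)/\sqrt{d}$, itself of the order of the claimed deviation, and re-centering the two-sided estimate around $\EEE f$ absorbs this shift at the cost of passing from prefactor $2$ to prefactor $4$, producing a bound of the form $u(\{|f-\EEE f|\ge\alpha\})\le 4\exp\bigl(-2d\alpha^2/(9\pi^3\eta(f)^2)\bigr)$. Setting the right-hand side equal to $\varepsilon$ and solving for $\alpha$ returns exactly $\alpha=\eta(f)\sqrt{9\pi^3\log(4/\varepsilon)/(2d)}$, the asserted bound, with the complementary event having measure $\ge 1-\varepsilon$.

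The main obstacle is not the qualitative concentration — that is standard — but the bookkeeping needed to land on the specific clean constants $9\pi^3/2$ and $\log(4/\varepsilon)$: one must use the particular lossy-but-explicit cap-measure estimate that produces the factor $9\pi^3$, and track carefully the two factors of $2$ (one from combining the upper and lower tails, one from the median-to-mean re-centering) that together turn the bare prefactor into the $4$ inside the logarithm. Sharp constants would require the optimal cap estimate and are unnecessary here; the stated form is precisely the version catalogued in the cited references.
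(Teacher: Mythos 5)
The first thing to note is that the paper does not prove this lemma at all: it is quoted as a standard concentration-of-measure fact, with a pointer to the literature (Sec.~II.C of \cite{SWGW22}), and it enters only as an input to the proof of Theorem~\ref{thm:dyntypA}. So your proposal is not paralleling a proof in the paper but attempting to supply one from scratch. Your skeleton --- identify $\SSS(\Hilbert)$ with $S^{2d-1}\subset\RRR^{2d}$, observe that Lipschitz in the chordal metric implies Lipschitz in the geodesic metric with the same constant, apply spherical isoperimetry to a sublevel set of the median, bound the complement of the enlargement by a cap estimate, then convert from median to mean --- is indeed the standard route to L\'evy's lemma and is correct in outline.

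There is, however, a genuine flaw in the constant bookkeeping, which you yourself flag as the crux. You assign the exponent $2dr^2/(9\pi^3)$ \emph{already to the cap estimate}, note (correctly) that the resulting median--mean gap $\Delta:=|\EEE f-m_f|$ is of the order of the claimed deviation, and then assert that re-centering costs only the prefactor $2\to 4$. These two claims are quantitatively incompatible. Write $c:=2d/(9\pi^3\eta(f)^2)$. Integrating the median tail $2e^{-cs^2}$ gives only $\Delta\le\sqrt{\pi/c}$, while the target deviation is $\alpha=\sqrt{\log(4/\varepsilon)/c}$, so the available bound allows $\Delta/\alpha=\sqrt{\pi/\log(4/\varepsilon)}$ (about $0.9$ for $\varepsilon=0.1$). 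Re-centering gives $\Pr\bigl[|f-\EEE f|\ge\alpha\bigr]\le 2e^{-c(\alpha-\Delta)^2}$, and the inequality $2e^{-c(\alpha-\Delta)^2}\le 4e^{-c\alpha^2}$ is equivalent to $c\,\Delta(2\alpha-\Delta)\le\log 2$; with $\Delta$ at its allowed maximum this left-hand side equals $2\sqrt{\pi\log(4/\varepsilon)}-\pi$, which exceeds $\log 2$ for \emph{every} $\varepsilon\le 1$. So, starting from your exponent, re-centering necessarily degrades the Gaussian rate rather than merely doubling the prefactor, and the stated form is not reached. The way the catalogued constants actually arise is the reverse of your attribution: one starts from the (near-)sharp isoperimetric cap bound, whose exponent is of order $dr^2$ with constant roughly $1/2$ per real dimension (not $2/(9\pi^3)\approx 1/140$), so that the median--mean shift \emph{is} negligible at the relevant scale; the lossy factor $9\pi^3$ is the end product that absorbs the chordal-versus-geodesic conversion, the median-to-mean shift, and both prefactors. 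With that correction your argument closes; as literally written, the final step fails.
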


Alternatively, Chebyshev's inequality yields that
\be\label{Chebybound}
|f(\psi)-\EEE f| \leq \sqrt{\frac{\Var(f)}{\varepsilon}}
\ee
for $(1-\varepsilon)$-most $\psi\in\SSS(\Hilbert)$. In the important special case $f\geq 0$, Markov's inequality yields that
\be\label{Markovbound}
f(\psi) \leq \frac{\EEE f}{\varepsilon}
\ee
for $(1-\varepsilon)$-most $\psi\in\SSS(\Hilbert)$, while L\'evy's lemma can be used in this situation to obtain that
\be\label{MarkovLevy}
f(\psi) \leq \EEE f + \sqrt{\frac{9\pi^3 \log(4/\varepsilon)}{2d}}\, \eta(f) \,.
\ee
Which bound is best depends on $\eta(f)$, $\Var(f)$, and $\EEE f$. For quadratic functions $f(\psi)=\scp{\psi}{B|\psi}$, $\eta(f)= 2\|B\|$ on $\SSS(\Hilbert)$, while expectation and variance are given by \eqref{Hilbertavg} and \eqref{Hilbertvar}; the first two bounds in \eqref{dyntyp1A} arise from the Chebyshev bound \eqref{Chebybound} with different ways of bounding the variance, and the third from L\'evy's lemma \eqref{Levy}.

As remarked already, the other results in this paper are not improved by using L\'evy's lemma instead of Markov's and Chebyshev's inequality. That is basically because the relevant functions $f\geq 0$ have means that are small like $1/$dimension but Lipschitz constants of order 1, so that \eqref{MarkovLevy} yields errors of order $1/\sqrt{\text{dimension}}$. Now it is of little interest to make $\varepsilon$ smaller than $10^{-200}$. (Borel once argued \cite[Chap.~6]{Bor62} that events with a probability of $10^{-200}$ or less can be expected to never occur in the history of the universe.)
On the other hand, the dimensions are large like $10^N$, so the advantage of \eqref{MarkovLevy} over \eqref{Markovbound} in $\varepsilon$ does not compensate for its disadvantage in the dimension.

\begin{proof}[Proof of Theorem \ref{thm:dyntypA}]
By \eqref{Hilbertavg} after inserting $\Hilbert_\mu$ for $\Hilbert$ and $P_\mu \exp(iHt) B \exp(-iHt) P_\mu$ for $B$,
\be
\EEE_\mu \scp{\psi_t}{B|\psi_t} = \frac{1}{d_\mu} \tr \bigl( P_\mu \exp(iHt) B \exp(-iHt) \bigr) = w_{\mu B}(t)\,.
\ee
L\'evy's lemma with $\eta=2\|B\|$ yields the third bound in \eqref{dyntyp1A}.

By \eqref{Varbound1} after inserting $\Hilbert_\mu$ for $\Hilbert$ and $P_\mu \exp(iHt) B \exp(-iHt) P_\mu$ for $B$,
\be\label{Varbound2}
\Var_\mu \scp{\psi_t}{B|\psi_t} \leq \frac{1}{d_\mu^2}\tr \Bigl( P_\mu \exp(-iHt) B^\dagger \exp(iHt) P_\mu \exp(iHt) B \exp(-iHt) P_\mu \Bigr) \,.
\ee
We give two upper bounds for the last expression. First, using $|\tr(CD)| \leq \|C\| \tr(|D|)$ and $\|B^\dagger\| = \|B\|$,
\begin{align}
\eqref{Varbound2} 
&\leq \frac{1}{d_\mu^2} \|P_\mu\| \|\exp(-iHt)\| \cdots \|\exp(-iHt)\| \tr P_\mu\\
&= \frac{1}{d_\mu^2} \|B\|^2 d_\mu = \frac{\|B\|^2}{d_\mu}\,.\label{Varbound3}
\end{align}
Second, by leaving $B$ rather than $P_\mu$ inside the trace,
\be
\eqref{Varbound2} \leq \frac{1}{d_\mu^2} \|B\| \tr (|B|)\,.
\ee
From these two bounds on the variance, \eqref{Chebybound} yields the first two bounds in \eqref{dyntyp1A}. For the second claim, \eqref{dyntyp2A}, of Theorem~\ref{thm:dyntypA}, the proof works as for Theorem~\ref{thm:dyntyp} with the bound \eqref{Varbound3} for $\Var_\mu \scp{\psi_t}{B|\psi_t}$.
\end{proof}

\subsection{Probability Current}
\label{sec:current}

In order to see that also the probability current $J_{\nu\nu'}(t)$ as defined in \eqref{Jdef} is deterministic, we verify that $\scp{\psi_t}{P_\nu H P_{\nu'}|\psi_t}$ is deterministic. This can be obtained in the same way as for Theorem~\ref{thm:dyntypA} by considering $B=P_\mu \exp(iHt) P_\nu H P_{\nu'} \exp(-iHt) P_\mu$ instead of $B=P_\mu \exp(iHt) P_\nu \exp(-iHt) P_\mu$ and noting that $\|B\| \leq \|H\| = \max\{|E-\Delta E|, |E|\}$. Physically, we expect $E$ to be comparable to the particle number $N$ and thus of order $\log D$, so $|J_{\nu\nu'}(t)-\EEE J_{\nu\nu'}(t)|$ is bounded by a constant times $\log D/\sqrt{\varepsilon d_\mu}$ (which would be small if we imagine $d_\mu \sim D^\alpha$ with $0<\alpha < 1$ and fixed $\varepsilon$) for $(1-\varepsilon)$-most $\psi_0 \in \SSS(\Hilbert_\mu)$. Likewise, $1/T$ times the $L^2$ norm over $[0,T]$ is bounded by a constant times $\log^2 D/\varepsilon d_\mu$ (which should be small).

\subsection{Hilbert Space Covariance}
\label{SM sec: Av Lemma}

For the proof of Lemma~\ref{lem: average}, we need the fourth moments of a random vector that is uniformly distributed over the unit sphere. So consider any Hilbert space $\Hilbert$ of dimension $d$ and a uniformly distributed $\psi\in\SSS(\Hilbert)$. Let $\left(\varphi_m\right)_{m}$ be an orthonormal basis of $\Hilbert$ and $a_m := \langle \varphi_m|\psi\rangle$. Then \cite{vonNeumann29}, \cite[App.~A.2 and C.1]{GMM04}

\begin{subequations}\label{expectations a_j}
\begin{align}
    (i)&\: \mathbb{E}(a_k^* a_l a_m^* a_n) = 0 \quad \mbox{if an index occurs only once},\\
    (ii)&\: \mathbb{E}\left(a_k^{*2} a_l^2\right) =0 \quad \mbox{for}\; k\neq l,\\
    (iii)&\: \mathbb{E}\left(|a_k|^4\right) = \frac{2}{d(d+1)},\\
    (iv)&\: \mathbb{E}\left(|a_k|^2 |a_l|^2\right) = \frac{1}{d(d+1)} \quad \mbox{for}\; k \neq l.
\end{align}
\end{subequations}

\begin{proof}[Proof of Lemma \ref{lem: average}]
Let $(\varphi_m)_m$ be an orthonormal basis of $\Hilbert$. Then we can write $\psi\in\mathbb{S}(\Hilbert)$ as 
 \begin{align}
     \psi = \sum_m a_m \varphi_m
 \end{align}
 with coefficients $a_m = \langle \varphi_m|\psi\rangle$. By \eqref{expectations a_j}, we get that
 \begin{align}
     \mathbb{E}\bigl[\langle\psi|B|\psi\rangle^* \langle\psi|C|\psi\rangle\bigr] &= \sum_{k,l,k',l'} \langle\varphi_k|B^\dagger|\varphi_l\rangle \langle\varphi_{k'}|C|\varphi_{l'}\rangle \mathbb{E}\left(a_{k}^* a_l a_{k'}^* a_{l'}\right)\\
     &= \frac{1}{d(d+1)} \sum_{k,l,k',l'} \langle\varphi_k|B^\dagger|\varphi_l\rangle \langle\varphi_{k'}|C|\varphi_{l'}\rangle \left(\delta_{kl} \delta_{k'l'}+\delta_{kl'}\delta_{k'l}\right)\\
     &= \frac{1}{d(d+1)} \left(\sum_{k,l} \langle\varphi_k|B^\dagger|\varphi_k\rangle \langle \varphi_l|C|\varphi_l\rangle + \langle\varphi_k|B^\dagger|\varphi_l\rangle \langle\varphi_l|C|\varphi_k\rangle\right)\\
     &= \frac{1}{d(d+1)} \bigl(\tr(B^\dagger) \tr(C) + \tr(B^\dagger C)\bigr).
 \end{align}
Thus,
\begin{align}
\mathrm{Cov}\bigl[ \scp{\psi}{B|\psi}, \scp{\psi}{C|\psi} \bigr]
&= \EEE \bigl[ \scp{\psi}{B|\psi}^* \scp{\psi}{C|\psi}\bigr] - \EEE\bigl[ \scp{\psi}{B|\psi}^*\bigr] \, \EEE \bigl[ \scp{\psi}{C|\psi} \bigr]\\
&= \frac{\tr(B^\dagger) \tr(C) + \tr(B^\dagger C)}{d(d+1)} -\frac{\tr(B^\dagger) \tr(C)}{d^2}\\
&= \frac{\tr(B^\dagger C)}{d(d+1)} - \frac{\tr(B^\dagger) \tr(C)}{d^2(d+1)}\,.
\end{align}
\end{proof}

\subsection{Computing and Estimating some Averages over $\boldsymbol{\mathbb{S}(\Hilbert_\mu)}$}

As a preparation for the proof of Theorem~\ref{thm: GNT A}, we derive in this section some upper bounds for relevant time and Hilbert space variances. We first note that it is well known that the limit in 
\begin{align}
    M_{\psi_0 B} = \overline{\langle\psi_t|B|\psi_t\rangle} := \lim_{T\to\infty} \frac{1}{T}\int_0^T \langle\psi_t|B|\psi_t\rangle\, dt\label{SM eq: time av}
\end{align}
exists for all $B$ and is given by
\be\label{timeavg}
M_{\psi_0 B} =  \scp{\psi_0}{\sum_{e\in\mathcal{E}}\prj_e B \prj_e|\psi_0}\,.
\ee
From \eqref{Hilbertavg}, applied to $\Hilbert_\mu$, we then obtain that
\be
\mathbb{E}_\mu M_{\psi_0 B} = \frac{1}{d_\mu}\sum_{e\in\mathcal{E}} \tr(P_\mu \prj_e B \prj_e)= M_{\mu B}\,.
\ee

\begin{prop}\label{SM prop upper bounds}
Let $\psi_0$ be uniformly distributed in $\mathbb{S}(\Hilbert_\mu)$, and let $B$ be any operator on $\Hilbert$. Then for every $\kappa, T>0$, 
\begin{align}
    \mathbb{E}_\mu\left(\left\langle\left|\langle\psi_t|B|\psi_t\rangle - \overline{\langle\psi_t|B|\psi_t\rangle}\right|^2\right\rangle_T\right)&\leq \frac{2D_E G(\kappa)}{d_\mu+1} \left(1+\frac{8\log_2 d_E}{\kappa T}\right) \min\left\{\|B\|^2, \frac{\tr(B^\dagger  B)}{d_\mu}\right\},\label{uppertimevar}\\
   \Var_\mu\overline{\langle\psi_t|B|\psi_t\rangle} &\leq \frac{\|B\|}{d_\mu+1} \min\left\{\|B\|, \frac{\tr(|B|)}{d_\mu} \right\}.
\end{align}
\end{prop}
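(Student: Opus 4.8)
The plan is to establish the two inequalities separately: the bound on $\Var_\mu\overline{\langle\psi_t|B|\psi_t\rangle}$ is elementary, whereas the bound on the $\psi_0$-averaged time variance is the substantial one and rests on the deterministic estimate of Section~\ref{sec:SF}.

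For the variance bound I would start from the closed form \eqref{timeavg}, which identifies the time average as $M_{\psi_0 B}=\overline{\langle\psi_t|B|\psi_t\rangle}=\scp{\psi_0}{\tilde B|\psi_0}$ for the energy-block-diagonal (dephased) operator $\tilde B:=\sum_{e\in\mathcal E}\prj_e B\prj_e$. Since $\EEE_\mu M_{\psi_0 B}=M_{\mu B}$, the quantity to control is just the Hilbert-space variance of $\scp{\psi_0}{\tilde B|\psi_0}$ over $\SSS(\Hilbert_\mu)$. Applying the covariance identity \eqref{covBC3} with $B=C=\tilde B$ gives $\EEE_\mu|\scp{\psi_0}{\tilde B|\psi_0}|^2$ in closed form; subtracting $|M_{\mu B}|^2$, the two $|\tr(P_\mu\tilde B)|^2$ contributions combine with a negative coefficient and may be dropped, leaving $\Var_\mu \le \tr(P_\mu\tilde B^\dagger P_\mu\tilde B)/(d_\mu(d_\mu+1))=\|P_\mu\tilde B P_\mu\|_{\mathrm{HS}}^2/(d_\mu(d_\mu+1))$. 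I would then estimate $\|P_\mu\tilde B P_\mu\|_{\mathrm{HS}}^2\le\|P_\mu\tilde B P_\mu\|\,\tr(|P_\mu\tilde B P_\mu|)$ using $|\tr(CD)|\le\|C\|\tr(|D|)$, and bound the two factors with the standard pinching contractions $\|\tilde B\|\le\|B\|$, $\tr(|\tilde B|)\le\tr(|B|)$, the projection contractions for $P_\mu$, and $\tr(|X|)\le\|X\|\,\mathrm{rank}(X)\le\|X\|\,d_\mu$. Bounding the trace norm alternately by $\|B\|\,d_\mu$ and by $\tr(|B|)$ produces exactly $\frac{\|B\|}{d_\mu+1}\min\{\|B\|,\tr(|B|)/d_\mu\}$.

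For the time-variance bound I would combine two ingredients. The first is the deterministic finite-time equilibration estimate of Section~\ref{sec:SF} (our sharpening of \cite{SF12}), which for fixed $\psi_0$ controls $\langle|\langle\psi_t|B|\psi_t\rangle-M_{\psi_0 B}|^2\rangle_T$ by the gap factor $G(\kappa)(1+8\log_2 d_E/(\kappa T))$ times a positive quadratic form in the off-diagonal energy overlaps $c_{ee'}:=\scp{\psi_0}{\prj_e B\prj_{e'}|\psi_0}$ with $e\neq e'$. This is natural because $\langle\psi_t|B|\psi_t\rangle=\sum_{e,e'}e^{i(e-e')t}c_{ee'}$, so the fluctuation about $M_{\psi_0 B}$ is carried by the $e\neq e'$ terms, and the time average $\tfrac1T\int_0^T$ resonates only when $e-e'=f-f'$; the factor $G(\kappa)$ and the $\kappa T$ correction encode this near-resonance bookkeeping. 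The second ingredient is to take $\EEE_\mu$ of this bound: averaging the quadratic form term by term with \eqref{covBC3} turns each $\EEE_\mu[c_{ee'}\overline{c_{ff'}}]$ into traces of products of $P_\mu$, $\prj_e$, $\prj_{e'}$ and $B$. The degeneracy input enters through $\tr(\prj_e)\le D_E$ and the normalization $\sum_e\tr(P_\mu\prj_e)=d_\mu$; concretely these give $\EEE_\mu\sum_e p_e^2\le\frac{D_E+1}{d_\mu+1}\le\frac{2D_E}{d_\mu+1}$ for $p_e=\scp{\psi_0}{\prj_e|\psi_0}$, which is the source of the prefactor $\frac{2D_E}{d_\mu+1}$. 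The $\min\{\|B\|^2,\tr(B^\dagger B)/d_\mu\}$ then arises exactly as above, by estimating the surviving trace either through an operator norm or by keeping $B^\dagger B$ inside the trace.

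The hard part will be entirely in the time-variance bound: legitimately interchanging the time and sphere averages and then controlling the resonant gap-pairs $\{(e,e'),(f,f')\}$ with $e-e'\approx f-f'$ uniformly in $T$. In particular one must check that the finite-$T$ correction $1+8\log_2 d_E/(\kappa T)$ survives the $\psi_0$-average without degradation and that the disconnected covariance contributions (the $\tr(P_\mu\,\cdot\,)\tr(P_\mu\,\cdot\,)$ pieces of \eqref{covBC3} generated by the off-diagonal blocks $\prj_e B\prj_{e'}$) are either nonnegative and discardable or absorbable into the same $D_E$ and trace estimates; this is precisely where the improvement over \cite{SF12} in the regime $d_\nu\ll d_\mu$ must be invoked with care. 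The variance bound, by contrast, is routine once \eqref{timeavg} and \eqref{covBC3} are available.
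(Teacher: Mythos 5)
Your proposal follows essentially the same route as the paper's proof: for the time-variance bound, write the fluctuation as a quadratic form $\sum_{\alpha\beta}v_\alpha^* R_{\alpha\beta}v_\beta$ in the off-diagonal overlaps $v_\alpha=\scp{\psi_0}{\prj_{e'}B^\dagger\prj_e|\psi_0}$, bound it by $\|R\|\sum_{e\neq e'}|\scp{\psi_0}{\prj_e B\prj_{e'}|\psi_0}|^2$ with the Short--Farrelly estimate $\|R\|\le G(\kappa)\bigl(1+\tfrac{8\log_2 d_E}{\kappa T}\bigr)$, then average with \eqref{covBC3} and estimate the resulting traces (the $D_E$ factor and the $\min$ arise exactly as you describe); and for the ensemble-variance bound, drop the negative disconnected term and bound $\tr(P_\mu\tilde B^\dagger P_\mu\tilde B)$ by $\|B\|\min\{d_\mu\|B\|,\tr(|B|)\}$, which is what the paper does in slightly different packaging. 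The difficulties you flag at the end are in fact immaterial in this scheme: since the bound $\|R\|\sum_{e\ne e'}|v_\alpha|^2$ holds pointwise in $\psi_0$ and $\|R\|$ is $\psi_0$-independent, no interchange of time and sphere averages is needed and the finite-$T$ factor passes through $\EEE_\mu$ untouched, while the disconnected pieces $|\tr(P_\mu\prj_e B\prj_{e'})|^2$ are nonnegative and absorbed via Cauchy--Schwarz into the same $D_E$ estimate, exactly as in the paper.
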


\begin{proof}
  We start similarly to the proof of Theorem 1 in \cite{SF12} and compute 
\begin{align}
    \left\langle \left|\langle\psi_t|B|\psi_t\rangle - \overline{\langle\psi_t|B|\psi_t\rangle}\right|^2 \right\rangle_T 
    &= \left\langle \left| \sum_{e,e'} e^{i(e-e')t} \langle \psi_0 |\prj_e B \prj_{e'}|\psi_0\rangle  - \sum_e \langle \psi_0| \prj_e B \prj_e|\psi_0\rangle\right|^2 \right\rangle_T\\
    &= \left\langle \left| \sum_{e\neq e'} e^{i(e-e')t} \langle\psi_0| \prj_e B \prj_{e'}|\psi_0\rangle \right|^2 \right\rangle_T\\
    &= \sum_{\substack{e\neq e'\\ e''\neq e'''}} \left\langle e^{i(e-e'-e''+e''')t}\right\rangle_T \langle\psi_0| \prj_e B \prj_{e'}|\psi_0\rangle \langle\psi_0| \prj_{e'''} B^\dagger  \prj_{e''}|\psi_0\rangle. \label{eq: Vart proof}
\end{align}

By averaging over $\psi_0\in\mathbb{S}(\Hilbert_\mu)$, we obtain
\begin{align}
    &\mathbb{E}_\mu \left(\left\langle \left|\langle\psi_t|B|\psi_t\rangle - \overline{\langle\psi_t|B|\psi_t\rangle} \right|^2 \right\rangle_T \right) \nonumber\\
    &= \sum_{\substack{e\neq e'\\ e''\neq e'''}} \left\langle e^{i(e-e'-e''+e''')t} \right\rangle_T \mathbb{E}_\mu\Bigl[\langle\psi_0| \prj_e B \prj_{e'}|\psi_0\rangle \langle\psi_0| \prj_{e'''} B^\dagger  \prj_{e''}|\psi_0\rangle\Bigr]\\
    &= \frac{1}{d_\mu(d_\mu+1)} \sum_{\substack{e\neq e'\\ e''\neq e'''}} \left\langle e^{i(e-e'-e''+e''')t} \right\rangle_T \Bigl[\tr(P_\mu \prj_e B \prj_{e'}) \tr(P_\mu \prj_{e'''} B^\dagger  \prj_{e''}) \nonumber\\
    &\quad + \tr(P_\mu \prj_e B \prj_{e'} P_\mu \prj_{e'''} B^\dagger  \prj_{e''})\Bigr],
\end{align}
where we applied Lemma \ref{lem: average} in the form \eqref{covBC3} in the second equality.

Next we compute the ensemble variance of $\overline{\langle\psi_t|B|\psi_t\rangle}$: By \eqref{timeavg} and \eqref{Hilbertvar} for $\Hilbert_\mu$,
\begin{align}
    &\Var_\mu \overline{\langle\psi_t|B|\psi_t\rangle} \nonumber\\
    &= \frac{1}{d_\mu(d_\mu+1)} \sum_{e,e'} \tr(P_\mu \prj_e B \prj_e P_\mu \prj_{e'} B^\dagger  \prj_{e'}) - \frac{1}{d_\mu^2(d_\mu+1)} \left|\sum_e \tr(P_\mu \prj_e B \prj_e)\right|^2.
\end{align}

In the rest of the proof we use the computed expressions to prove the upper bounds for $\mathbb{E}_\mu \left(\langle \left|\langle\psi_t|B|\psi_t\rangle - M_{\psi_0 B}\right|^2 \rangle_T \right)$ and $ \Var_\mu \overline{\langle\psi_t|B|\psi_t\rangle}$. To this end, we define for $\alpha = (e,e') \in \mathcal{G} := \{(\bar{e},\bar{e}')\in\mathcal{E}\times\mathcal{E}, \bar{e}\neq\bar{e}'\}$ the vector $v_\alpha := \langle\psi_0| \prj_{e'} B^\dagger  \prj_e|\psi_0\rangle$. Moreover, we define the Hermitian matrix
\begin{align}
    R_{\alpha\beta} := \left\langle e^{i(G_\alpha-G_\beta)t} \right\rangle_T
\end{align}
with $G_\alpha:=e-e'$ for $\alpha=(e,e')$.
Then we obtain with \eqref{eq: Vart proof} that
\begin{align}
    \left\langle \left|\langle\psi_t|B|\psi_t\rangle - \overline{\langle\psi_t|B|\psi_t\rangle}\right|^2 \right\rangle_T &= \sum_{\alpha,\beta} v_\alpha^* R_{\alpha\beta} v_\beta\\
    &\leq \|R\| \sum_{\alpha} |v_\alpha|^2\\
    &= \|R\| \sum_{e\neq e'} \bigl|\langle\psi_0|\prj_e B \prj_{e'}|\psi_0\rangle\bigr|^2
\end{align}
and thus
\begin{align}
    &\mathbb{E}_\mu\left(\left\langle \left|\langle\psi_t|B|\psi_t\rangle - \overline{\langle\psi_t|B|\psi_t\rangle} \right|^2 \right\rangle_T\right)\nonumber\\
    &\leq \|R\| \sum_{e,e'} \mathbb{E}_\mu\bigl[\langle\psi_0|\prj_e B \prj_{e'}|\psi_0\rangle \langle\psi_0|\prj_{e'} B^\dagger  \prj_e|\psi_0\rangle\bigr]\\
    &= \frac{\|R\|}{d_\mu(d_\mu+1)} \sum_{e,e'} \left[\left|\tr(P_\mu \prj_e B \prj_{e'})\right|^2 + \tr(P_\mu \prj_e B \prj_{e'} P_\mu \prj_{e'} B^\dagger  \prj_e)\right]
\end{align}
by \eqref{covBC3}.
Short and Farrelly \cite{SF12} showed for arbitrary $\kappa>0$ and $T>0$ that
\begin{align}
    \|R\| \leq G(\kappa) \left(1+\frac{8\log_2 d_E}{\kappa T}\right).
\end{align}
Moreover, we estimate
\begin{align}
    \sum_{e,e'} |\tr(P_\mu \prj_e B \prj_{e'})|^2 &= \sum_{e,e'} |\tr(\prj_{e'} P_\mu \prj_e \prj_e B \prj_{e'})|^2\\
    &\leq \sum_{e,e'} \underbrace{\tr(\prj_{e'} P_\mu \prj_e P_\mu)}_{\leq \tr(\prj_{e'})\leq D_E} \tr(\prj_{e'} B^\dagger  \prj_e B)\\
    &\leq D_E \tr(B^\dagger  B),
\end{align}
where we used the Cauchy-Schwarz inequality for operators $A,B$ with scalar product $\tr(A^\dagger  B)$ and that $|\tr(CD)|\leq \|C\| \tr(|D|)$. Similarly we find that
\begin{align}
    \sum_{e,e'} |\tr(P_\mu \prj_e A \prj_{e'})|^2 &\leq \sum_{e,e'} \tr(\prj_{e'}P_\mu \prj_e P_\mu) \underbrace{\tr(\prj_{e'} B^\dagger  \prj_e B)}_{\leq \tr(\prj_{e'})\|B\|^2}\\
    &\leq D_E \|B\|^2 d_\mu.
\end{align}
 This shows that
\begin{align}
    \sum_{e,e'} |\tr(P_\mu \prj_e B \prj_{e'})|^2 &\leq D_E \min\{\|B\|^2 d_\mu, \tr(B^\dagger B)\}.
\end{align}
Next we compute
\begin{align}
    \sum_{e,e'} \tr(P_\mu \prj_e B \prj_{e'} P_\mu \prj_{e'} B^\dagger  \prj_e) &= \sum_e \tr\left(\prj_e P_\mu \prj_e B \left(\sum_{e'} \prj_{e'}P_\mu \prj_{e'}\right) B^\dagger \right)\\
    &\leq \sum_{e} \tr(\prj_e P_\mu \prj_e) \left\|B \left(\sum_{e'} \prj_{e'} P_\mu \prj_{e'}\right) B^\dagger \right\|\\
    &\leq \|B\|^2 \sum_{e} \tr(\prj_e P_\mu)\\
    &= \|B\|^2 d_\mu,
\end{align}
where we used in the third line that $\|\sum_{e'} \prj_{e'}P_\mu \prj_{e'}\| \leq 1$, which follows immediately from
\begin{align}
    \left\|\sum_{e'} \prj_{e'}P_\mu \prj_{e'} \psi_0 \right\|^2 = \sum_{e'} \|\prj_{e'}P_\mu \prj_{e'}\psi_0\|^2 &\leq \sum_{e'} \|\prj_{e'}\psi_0\|^2 = \|\psi_0\|^2. 
\end{align}
Similarly we estimate
\begin{align}
  \sum_{e,e'} \tr(P_\mu \prj_e B \prj_{e'} P_\mu \prj_{e'} B^\dagger  \prj_e) &= \sum_{e'} \tr\left(\left(\sum_e \prj_e P_\mu \prj_e\right)B \prj_{e'} P_\mu \prj_{e'} B^\dagger  \right)\\
  &\leq \sum_{e'} \tr(B \prj_{e'} P_\mu \prj_{e'} B^\dagger )\\
  &= \sum_{e'} \tr(\prj_{e'}B^\dagger  B \prj_{e'} \prj_{e'} P_\mu \prj_{e'})\\
  &\leq \sum_{e'} \tr(\prj_{e'} B^\dagger  B)\\
  &= \tr(B^\dagger B).
\end{align}
The previous two estimates show that
\begin{align}
      \sum_{e,e'} \tr(P_\mu \prj_e B \prj_{e'} P_\mu \prj_{e'} B^\dagger  \prj_e) \leq \min\{\|B\|^2 d_\mu, \tr(B^\dagger B)\}.
\end{align}
Putting everything together, we arrive at the upper bound
\begin{align}
    \mathbb{E}_\mu\left(\left\langle \left|\langle\psi_t|B|\psi_t\rangle - \overline{\langle\psi_t|B|\psi_t\rangle} \right|^2 \right\rangle_T\right) \leq \frac{2D_E G(\kappa)}{d_\mu+1} \left(1+\frac{8\log_2 d_E}{\kappa T}\right) \min\left\{\|B\|^2, \frac{\tr(B^\dagger  B)}{d_\mu}\right\}.
\end{align}
Finally we turn to the upper bound for $\Var_\mu\overline{\langle\psi_t|B|\psi_t\rangle}$. To this end, we estimate
\begin{align}
    \sum_{e,e'}\tr(P_\mu \prj_e B \prj_e P_\mu \prj_{e'} B^\dagger  \prj_{e'}) &= \tr\left(P_\mu \left(\sum_e \prj_e B \prj_e\right)P_\mu \left(\sum_{e'} \prj_{e'} B^\dagger  \prj_{e'}\right)\right)\\
    &\leq \tr(P_\mu) \left\|\left(\sum_e \prj_e B \prj_e\right)P_\mu \left(\sum_{e'} \prj_{e'} B^\dagger  \prj_{e'}\right) \right\|\\
    &\leq d_\mu \|B\|^2
\end{align}
and 
\begin{align}
     \sum_{e,e'}\tr(P_\mu \prj_e B \prj_e P_\mu \prj_{e'} B^\dagger  \prj_{e'}) &=  \tr\left(B \left(\sum_e \prj_e P_\mu \left(\sum_{e'} \prj_{e'} B^\dagger  \prj_{e'}\right)P_\mu \prj_e\right)\right)\\
     &\leq \tr(|B|) \left\|\sum_e \prj_e P_\mu \left(\sum_{e'} \prj_{e'} B^\dagger  \prj_{e'}\right)P_\mu \prj_e\right\|\\
     &\leq \tr(|B|) \left\|P_\mu\left(\sum_{e'} \prj_{e'}B^\dagger  \prj_{e'}\right)P_\mu \right\|\\
     &\leq \tr(|B|)\|B\|.
\end{align}
This shows that
\begin{align}
    \sum_{e,e'}\tr(P_\mu \prj_e B \prj_e P_\mu \prj_{e'} B^\dagger  \prj_{e'}) &\leq \|B\| \min\left\{d_\mu \|B\|, \tr(|B|)\right\}
\end{align}
and thus
\begin{align}
    \Var_\mu\overline{\langle\psi_t|B|\psi_t\rangle} &\leq \frac{1}{d_\mu(d_\mu+1)} \sum_{e,e'}\tr(P_\mu \prj_e B \prj_e P_\mu \prj_{e'} B^\dagger  \prj_{e'})\\
    &\leq \frac{\|B\|}{d_\mu+1} \min\left\{\|B\|, \frac{\tr(|B|)}{d_\mu} \right\}.
\end{align}
\end{proof}

\subsection{Proof of Theorems \ref{thm: GNT Pnu} and \ref{thm: GNT A}}

Theorem~\ref{thm: GNT Pnu} follows immediately from Theorem~\ref{thm: GNT A} by setting
$B=P_\nu$, choosing $\kappa$ small enough such that $G(\kappa)=D_G$, and then taking the limit $T\to\infty$. 

\begin{proof}[Proof of Theorem~\ref{thm: GNT A}]
Markov's inequality implies
\begin{align}
     &\mathbb{P}_\mu\left(\left\langle \left|\langle\psi_t|B|\psi_t\rangle - M_{\psi_0 B} \right|^2 \right\rangle_T \geq \frac{4D_E G(\kappa) \|B\|}{\varepsilon d_\mu} \left(1+\frac{8\log_2 d_E}{\kappa T}\right)\min\left\{\|B\|, \frac{\tr(|B|)}{d_\mu} \right\}\right)\nonumber\\ 
     &\leq \frac{\mathbb{E}_\mu\left(\left\langle \left| \langle\psi_t|B|\psi_t\rangle - M_{\psi_0 B}\right|^2 \right\rangle_T\right)}{4 D_E G(\kappa) \|B\| \left(1+\frac{8\log_2 d_E}{\kappa T}\right) \min\left\{\|B\|, \frac{\tr(|B|)}{d_\mu} \right\}} \varepsilon d_\mu\\
     &\leq \frac{\min\left\{\right\|B\|^2, \frac{\tr(B^\dagger B)}{d_\mu}\}}{2\|B\| \min\left\{\|B\|, \frac{\tr(|B|)}{d_\mu}\right\}}\varepsilon\\
     &\leq \frac{\varepsilon}{2},
\end{align}
where we used the bounds from Proposition \ref{SM prop upper bounds} and that $\tr(B^\dagger B) \leq \|B\| \tr(|B|)$. This means that for $(1-\frac{\varepsilon}{2})$-most $\psi_0\in\mathbb{S}(\Hilbert_\mu)$,
\begin{align}
    \left\langle \left|\langle\psi_t|B|\psi_t\rangle - M_{\psi_0 B} \right|^2 \right\rangle_T < \frac{4D_E G(\kappa) \|B\|}{\varepsilon d_\mu} \left(1+\frac{8\log_2 d_E}{\kappa T}\right) \min\left\{\|B\|, \frac{\tr(|B|)}{d_\mu} \right\}.
\end{align}

Again with the help of Markov's inequality we obtain that, with $\lambda$ the Lebesgue measure on $\mathbb{R}$,
\begin{align}
 &\frac{\lambda\Bigl\{t\in[0,T]:\left|\langle\psi_t|B|\psi_t\rangle - M_{\psi_0 B} \right|^2 \geq \frac{4D_E G(\kappa) \|B\|}{\delta\varepsilon d_\mu} \left(1+\frac{8\log_2 d_E}{\kappa T}\right) \min\left\{\|B\|, \frac{\tr(|B|)}{d_\mu}\right\}\Bigr\}}{T}\\ &\hspace{5cm}\leq \frac{\delta\varepsilon d_\mu\left\langle\left|\langle\psi_t|B|\psi_t\rangle - M_{\psi_0 B} \right|^2\right\rangle_T}{4D_E G(\kappa) \|B\| \left(1+\frac{8\log_2 d_E}{\kappa T}\right) \min\left\{\|B\|, \frac{\tr(|B|)}{d_\mu} \right\}}\\
&\hspace{5cm}\leq \delta.
\end{align}
This shows that
for $(1-\frac{\varepsilon}{2})$-most $\psi_0\in\mathbb{S}(\Hilbert_\mu)$ we have for $(1-\delta)$-most $t\in[0,T]$ that
\begin{align}
    \left|\langle\psi_t|B|\psi_t\rangle - M_{\psi_0 B}\right| \leq 
    2\left(\frac{D_E G(\kappa) \|B\|}{\delta\varepsilon d_\mu}\left(1+\frac{8\log_2 d_E}{\kappa T}\right)\min\left\{\|B\|, \frac{\tr(|B|)}{d_\mu}\right\}\right)^{1/2}.
\end{align}
Next we prove in a similar way an upper bound for $|M_{\psi_0 B}-M_{\mu B}|$, keeping in mind that $M_{\mu B} = \mathbb{E}_\mu M_{\psi_0 B}$. An application of Chebyshev's inequality and Proposition \ref{SM prop upper bounds} shows that
\begin{align}
    \mathbb{P}_\mu\left(|M_{\psi_0 B}-M_{\mu B}|\geq \sqrt{\frac{2 \|B\| \min\left\{\|B\|,\frac{\tr(|B|)}{d_\mu}\right\}}{d_\mu\varepsilon}}\right)&\leq \frac{\Var_\mu\overline{\langle\psi_t|B|\psi_t\rangle}}{2\|B\|\min\left\{\|B\|, \frac{\tr(|B|)}{d_\mu}\right\}} d_\mu\varepsilon\\
    &\leq \frac{\varepsilon}{2}.
\end{align}
This implies for $(1-\frac{\varepsilon}{2})$-most $\psi_0\in\mathbb{S}(\Hilbert_\mu)$ that
\begin{align}
    |M_{\psi_0 B}-M_{\mu B}| \leq
    \sqrt{2}\left(\frac{\|B\|}{\varepsilon d_\mu}\min\left\{\|B\|,\frac{\tr(|B|)}{d_\mu}\right\}\right)^{1/2}.
\end{align}
With the triangle inequality we finally obtain the stated upper bound for $|\langle\psi_t|B|\psi_t\rangle - M_{\mu B}|$.
\end{proof}

\subsection{Proof of Corollary \ref{cor: GNT}}

From Theorem \ref{thm: GNT A} we obtain immediately that for $(1-\varepsilon)$-most $\psi_0\in\mathbb{S}(\Hilbert_\mu)$ for $(1-\delta)$-most of the time
\begin{align}
    \biggl|\|P_{\nu_+}\psi_t\|^2 - M_{\mu\nu_+}\biggr| &\leq 4\sqrt{\frac{D_E D_G}{\delta\varepsilon}} \exp\left(-\frac{s_\mu N}{2k_B}\right) \min\left\{1,\exp\left(\frac{(s_{\nu_+}-s_\mu)N}{2k_B}\right)\right\}\\
    &= 4 \frac{\sqrt{D_E D_G}}{\sqrt{\varepsilon\delta}} \exp\left(-\frac{s_\mu N}{2k_B}\right).
\end{align}
Similarly, we find for $(1-\varepsilon)$-most $\psi_0\in\mathbb{S}(\Hilbert_\mu)$ for $(1-\delta)$-most of the time that
\begin{align}
    \biggl|\|P_{\nu_-}\psi_t\|^2 - M_{\mu\nu_-}\biggr| &\leq 4\sqrt{\frac{D_E D_G}{\varepsilon\delta}} \exp\left(-\frac{s_\mu N}{2k_B}\right) \min\left\{1,\exp\left(\frac{(s_{\nu_-}-s_\mu)N}{2k_B}\right)\right\}\\
    &= 4\frac{\sqrt{D_E D_G}}{\sqrt{\varepsilon\delta}} \exp\left(-\frac{(s_\mu-\frac{s_{\nu_-}}{2})N}{k_B}\right).
\end{align}
This finishes the proof.$\hfill\square$

\subsection{Alternative Estimate in Terms of Effective Dimension}
\label{sec:SF}

In Proposition~\ref{SM prop upper bounds}, we have provided two upper bounds \eqref{uppertimevar} for
\[
\mathbb{E}_\mu\left(\left\langle \left|\langle\psi_t|B|\psi_t\rangle - \overline{\langle\psi_t|B|\psi_t\rangle} \right|^2 \right\rangle_T\right)\,.
\]
There is an alternative way of obtaining one of the two  bounds
in \eqref{uppertimevar} using a result of Short and Farrelly \cite{SF12} based on the concept of effective dimension. We briefly explain this alternative derivation and then comment on why we also need the other bound in~\eqref{uppertimevar}.

In \cite{SF12} the authors show that \begin{align}
    \left\langle \left|\langle\psi_t|B|\psi_t\rangle - \overline{\langle\psi_t|B|\psi_t\rangle} \right|^2 \right\rangle_T & \leq \frac{G(\kappa) \|B\|^2}{d_{\textup{eff}}} \left(1+\frac{8\log_2 d_E}{\kappa T}\right),\label{ineq: SF}
\end{align}
where the effective dimension $d_{\textup{eff}} = d_{\textup{eff}}(\psi_0)$ of a state $\psi_0$ is 
\begin{align}
    d_{\textup{eff}} = \left(\sum_e \langle\psi_0|\Pi_e|\psi_0\rangle^2\right)^{-1}.
\end{align}
Taking an average over $\psi_0\in\mathbb{S}(\Hilbert_\mu)$
yields the bound
\begin{align}
    \mathbb{E}_\mu\left(\left\langle \left|\langle\psi_t|B|\psi_t\rangle - \overline{\langle\psi_t|B|\psi_t\rangle} \right|^2 \right\rangle_T\right) &\leq \frac{2D_E G(\kappa) \|B\|^2}{d_\mu+1}\left(1+\frac{8\log_2 d_E}{\kappa T}\right).\label{ineq: Emu SFbound}
\end{align}
To see this, note that the only quantity on the right-hand side of \eqref{ineq: SF} that depends on $\psi_0$  is the effective dimension $d_{\textup{eff}}$; therefore, it suffices to estimate $\mathbb{E}_\mu d_{\textup{eff}}^{-1}$. With the help of \eqref{covBC3} and the usual arguments we find
\begin{align}
    \mathbb{E}_\mu d_{\textup{eff}}^{-1} &= \sum_e \mathbb{E}_\mu\left(\langle\psi_0|\Pi_e|\psi_0\rangle\langle\psi_0|\Pi_e|\psi_0\rangle\right)\\
    &= \frac{1}{d_\mu(d_\mu+1)}\left(\tr(P_\mu\Pi_e)^2+\tr(P_\mu\Pi_e P_\mu \Pi_e)\right)\\
    &\leq \frac{1}{d_\mu(d_\mu+1)}\sum_e \left(\underbrace{\tr(\Pi_e)}_{\leq D_E}\tr(P_\mu\Pi_e)+ \tr(P_\mu\Pi_eP_\mu)\right)\\
    &\leq \frac{2D_E}{d_\mu(d_\mu+1)}\sum_e \tr(P_\mu \Pi_e)\\
    &= \frac{2D_E}{d_\mu+1}\,,
\end{align}
and \eqref{ineq: Emu SFbound} immediately follows.

The second estimate in Proposition \ref{SM prop upper bounds} is sharper than \eqref{ineq: Emu SFbound}  if and only if $\tr(B^\dagger B)/d_\mu < \|B\|^2$, i.e., roughly speaking, if only few (compared to $d_\mu$) eigenvalues of $B^\dagger B$ are close to the largest eigenvalue and most are much smaller.
This becomes relevant, for example, when estimating the transitions from $\Hilbert_\mu$ into a lower entropy macro space $\Hilbert_\nu$, cf.\ \eqref{cor2}. Then $B=P_\nu$ and 
\[
\tr(B^\dagger B)/d_\mu = d_\nu/d_\mu \ll 1 =  \|P_\nu\|^2\,.
\]

\section{Conclusions}
\label{sec:conclusions}

Our results concern the behavior of typical pure states $\psi_0$ from a high-dimensional subspace $\Hilbert_\mu$ of Hilbert space under the unitary time evolution. We find that for any operator $B$, due to the large dimension, the curve $t\mapsto \scp{\psi_t}{B|\psi_t}$ is nearly deterministic (a fact that can also be obtained from \cite{BRGSR18,RG20}), and that in the long run $t\to\infty$ it is nearly constant.  In von Neumann's framework of an orthogonal decomposition $\Hilbert=\oplus_\nu \Hilbert_\nu$ into macro spaces, this means that the time-dependent distribution over the macro states given by the superposition weights $\|P_\nu \psi_t\|^2$ is nearly deterministic and in the long run nearly constant, i.e., it reaches \emph{normal equilibrium}, a situation analogous (but not identical) to thermal equilibrium. Through our theorems, we have provided explicit error bounds.

Von Neumann's \cite{vonNeumann29} prior result in the same direction was based on unrealistic assumptions, saying essentially that $H$ is unrelated to $\Hilbert_\nu$. Our result has the advantage of being applicable regardless of relations between $H$ and $\Hilbert_\nu$. The question of whether the deviation from the mean is small compared to the mean even when the mean is small itself, will be analyzed further elsewhere \cite{TTV22-mathe}. 
\\
\\
\\
\noindent \textbf{\large Acknowledgments}\\
\\
We thank both referees for valuable feedback and for pointing out to us reference~\cite{BRGSR18}.
C.V.\ gratefully acknowledges financial support by the German Academic Scholarship Foundation.
\\
\\
\noindent\textbf{\large Data Availability Statement}\\
\\
The Matlab code used to generate the datasets of the provided examples is available from the corresponding author on request.\\
\\
\noindent\textbf{\large Conflict of Interest Statement}\\
\\
The authors have no conflicts of interest.

\bibliographystyle{plainurl}

\begin{thebibliography}{10}


\bibitem{BRGSR18}
B.~Balz,  J.~Richter, J.~Gemmer, R.~Steinigeweg,   and P.~Reimann.
\newblock{Dynamical typicality for initial states with a preset measurement statistics of several commuting observables, in:}
\newblock 
{\it Thermodynamics in the Quantum Regime}, edited by F. Binder, L. A. Correa, C. Gogolin, J. Anders, and G. Adesso, (Springer, Cham, 2019), Chap.\ 17, p.\ 413--433. URL: \url{http://arxiv.org/abs/1904.03105} 


\bibitem{BG09}
C.~Bartsch and J.~Gemmer.
\newblock Dynamical typicality of quantum expectation values.
\newblock {\em Physical Review Letters}, 102:110403, 2009.
\newblock URL: \url{http://arxiv.org/abs/0902.0927}.

\bibitem{Bor62}
E.~Borel.
\newblock {\em Probabilities and Life}.
\newblock Dover, 1962.

\bibitem{Deutsch91}
J.~M. Deutsch.
\newblock {Quantum statistical mechanics in a closed system}.
\newblock {\em Physical Review A}, 43:2046--2049, 1991.

\bibitem{GMM04}
J.~Gemmer, G.~Mahler, and M.~Michel.
\newblock {\em Quantum Thermodynamics}.
\newblock Springer, 2004.

\bibitem{GogEis16}
C.~Gogolin and J.~Eisert.
\newblock Equilibration, thermalisation and the emergence of statistical
  mechanics in closed quantum systems.
\newblock {\em Reports on Progress in Physics}, 79:056001, 2016.
\newblock URL: \url{https://arxiv.org/abs/1503.07538}.

\bibitem{GHT13}
S.~Goldstein, T.~Hara, and H.~Tasaki.
\newblock {Time Scales in the Approach to Equilibrium of Macroscopic Quantum
  Systems}.
\newblock {\em Physical Review Letters}, 111:140401, 2013.
\newblock URL: \url{https://arxiv.org/abs/1307.0572}.

\bibitem{GHT14}
S.~Goldstein, T.~Hara, and H.~Tasaki.
\newblock The approach to equilibrium in a macroscopic quantum system for a
  typical nonequilibrium subspace, 2014.
\newblock Preprint, URL: \url{http://arxiv.org/abs/1402.3380}.

\bibitem{GHT15}
S.~Goldstein, T.~Hara, and H.~Tasaki.
\newblock {Extremely quick thermalization in a macroscopic quantum system for a
  typical nonequilibrium subspace}.
\newblock {\em New Journal of Physics}, 17:045002, 2015.
\newblock URL: \url{https://arxiv.org/abs/1402.0324}.

\bibitem{GLMTZ09}
S.~Goldstein, J.~L. Lebowitz, C.~Mastrodonato, R.~Tumulka, and N.~Zangh\`\i.
\newblock {Normal Typicality and von Neumann's Quantum Ergodic Theorem}.
\newblock {\em Proceedings of the Royal Society A}, 466(2123):3203--3224, 2010.
\newblock URL: \url{https://arxiv.org/abs/0907.0108}.

\bibitem{GLMTZ10}
S.~Goldstein, J.~L. Lebowitz, C.~Mastrodonato, R.~Tumulka, and N.~Zangh\`\i.
\newblock {On the Approach to Thermal Equilibrium of Macroscopic Quantum
  Systems}.
\newblock {\em Physical Review E}, 81:011109, 2010.
\newblock URL: \url{https://arxiv.org/abs/0911.1724}.

\bibitem{GLTZ06}
S.~Goldstein, J.~L. Lebowitz, R.~Tumulka, and N.~Zangh\`\i.
\newblock {Canonical Typicality}.
\newblock {\em Physical Review Letters}, 96:050403, 2006.
\newblock URL: \url{https://arxiv.org/abs/cond-mat/0511091}.

\bibitem{GLTZ10}
S.~Goldstein, J.~L. Lebowitz, R.~Tumulka, and N.~Zangh\`\i.
\newblock {Long-Time Behavior of Macroscopic Quantum Systems}.
\newblock {\em European Physical Journal H}, 35:173--200, 2010.
\newblock URL: \url{https://arxiv.org/abs/1003.2129}.

\bibitem{GLTZ20}
S.~Goldstein, J.~L.~Lebowitz, R.~Tumulka, and N.~Zangh\`\i.
\newblock {Gibbs and Boltzmann Entropy in Classical and Quantum Mechanics, in:}
\newblock {\em Statistical Mechanics and Scientific Explanation: Determinism, Indeterminism and Laws of Nature},
\newblock edited by V. Allori (World Scientific, Singapore, 2020),
\newblock Chap. 14, p. 519-581,
\newblock
URL: \url{https://arxiv.org/abs/1903.11870}.


\bibitem{LPSW09}
N.~Linden, S.~Popescu, A.J. Short, and A.~Winter.
\newblock Quantum mechanical evolution towards thermal equilibrium.
\newblock {\em Physical Review E}, 79:061103, 2009.
\newblock URL: \url{http://arxiv.org/abs/0812.2385}.

\bibitem{MGE}
M.~P. M\"uller, D.~Gross, and J.~Eisert.
\newblock Concentration of measure for quantum states with a fixed expectation
  value.
\newblock {\em Communications in Mathematical Physics}, 303:785--824, 2011.
\newblock URL: \url{http://arxiv.org/abs/1003.4982}.

\bibitem{vonNeumann29}
J.~von Neumann.
\newblock {Beweis des Ergodensatzes und des $H$-Theorems in der neuen
  Mechanik}.
\newblock {\em Zeitschrift f\"ur Physik}, 57:30--70, 1929.
\newblock English translation: \textit{European Physical Journal H}, 35:
  201--237, 2010.
\newblock URL: \url{https://arxiv.org/abs/1003.2133}.

\bibitem{PSW06}
S.~Popescu, A.~J. Short, and A.~Winter.
\newblock {Entanglement and the foundation of statistical mechanics}.
\newblock {\em Nature Physics}, 21(11):754--758, 2006.
\newblock URL: \url{https://arxiv.org/abs/quant-ph/0511225}.

\bibitem{Reimann08}
P.~Reimann.
\newblock {Foundations of Statistical Mechanics under Experimentally Realistic
  Conditions}.
\newblock {\em Physical Review Letters}, 101:190403, 2008.
\newblock URL: \url{https://arxiv.org/abs/0810.3092}.

\bibitem{Reimann2015}
P.~Reimann.
\newblock Generalization of von {N}eumann's approach to thermalization.
\newblock {\em Physical Review Letters}, 115:010403, 2015.
\newblock URL: \url{http://arxiv.org/abs/1507.00262}.

\bibitem{Reimann2018b}
P.~Reimann.
\newblock Dynamical typicality approach to eigenstate thermalization.
\newblock {\em Physical Review Letters}, 120:230601, 2018.
\newblock URL: \url{http://arxiv.org/abs/1806.03193}.

\bibitem{Reimann2018a}
P.~Reimann.
\newblock Dynamical typicality of isolated many-body quantum systems.
\newblock {\em Physical Review E}, 97:062129, 2018.
\newblock URL: \url{http://arxiv.org/abs/1805.07085}.

\bibitem{RG20}
P.~Reimann and J.~Gemmer.
\newblock Why are macroscopic experiments reproducible? {I}mitating the
  behavior of an ensemble by single pure states.
\newblock {\em Physica A}, 552:121840, 2020.
\newblock URL: \url{http://arxiv.org/abs/2005.14626}.

\bibitem{Short11}
A.~J. Short.
\newblock Equilibration of quantum systems and subsystems.
\newblock {\em New Journal of Physics}, 13:053009, 2011.
\newblock URL: \url{https://arxiv.org/abs/1012.4622}.

\bibitem{SF12}
A.~J. Short and T.~C. Farrelly.
\newblock Quantum equilibration in finite time.
\newblock {\em New Journal of Physics}, 14:013063, 2012.

\bibitem{simon}
B.~Simon.
\newblock {\em Operator Theory: A Comprehensive Course in Analysis, Part 4.}
\newblock American Mathematical Society, 2015.

\bibitem{Srednicki94}
M.~Srednicki.
\newblock {Chaos and quantum thermalization}.
\newblock {\em Physical Review E}, 50:888--901, 1994.
\newblock URL: \url{https://arxiv.org/abs/cond-mat/9403051}.

\bibitem{SWGW22}
P.~Strasberg, A.~Winter, J.~Gemmer, and J.~Wang.
\newblock Classicality, {M}arkovianity, and local detailed balance from pure
  state dynamics.
\newblock Preprint, 2022.
\newblock URL: \url{http://arxiv.org/abs/2209.07977}.

\bibitem{TTV22-mathe}
S.~Teufel, R.~Tumulka, and C.~Vogel.
\newblock {Typical Macroscopic Long-Time Behavior for Random Hamiltonians},
  2023.
\newblock Preprint.






\end{thebibliography}

\end{document}